\newif\ifniceformat
\niceformattrue     % Uncomment for nice single-column format
% \niceformatfalse    % Uncomment for the conference format

%%%%%%%%%%%%%%%%%%%%%%%%%%%%%%% Arxiv single-column format %%%%%%%%%%%%%%%%%%%%%%%%%%%%%%%%%%%%%%
\ifniceformat
\documentclass[11pt]{article}
\usepackage{textcomp}
\usepackage{amsmath}
\usepackage{amssymb}
\usepackage{amsthm}
\usepackage{todonotes}
\usepackage{tikz}
\usepackage{pgfplots}
\usepackage{graphicx}
\usepackage{epstopdf}
\usepackage{wrapfig}
\usepackage[labelfont=bf,font=footnotesize]{caption}
\usepackage[labelfont=bf,font=footnotesize]{subcaption}
\usepackage{xcolor}
\usepackage{bbm}
\usepackage{makecell}
\usepackage{authblk}
\usepackage{fullpage,etoolbox}
\usepackage[shortlabels]{enumitem}
\usepackage[round,sort,compress]{natbib}
\usepackage{listings}
\usepackage{commands}
\usepackage[ruled,vlined,linesnumbered]{algorithm2e}
\usepackage{todonotes}
\usepackage{float}
\usepackage{makecell}
\usepackage{balance}  % Balance the columns
\usepackage[colorlinks=true,citecolor=blue,linkcolor=blue,urlcolor=blue]{hyperref}

\usepackage{auxFiles/symbolDef}
% Auxiliary commands (typically for editing)
%   Fernando Gama fgama@seas.upenn.edu

%%%%%%%%%%%%%%%%%%%%
%%                %%
%%    COLORING    %%
%%                %%
%%%%%%%%%%%%%%%%%%%%

%   Check if the package has already been loaded, if it hasn't load it
\makeatletter
    \@ifpackageloaded{xcolor}{}{\usepackage{xcolor}}
\makeatother
%

%----------- editing helper ----
% \newcommand{\rewrite}[2]{\blue{\sout{#1} #2}}
\newcommand{\rewrite}[2]{{#2}}

%%%%%%%%%%%%%%%%%%%%
%%                %%
%%    STYLING     %%
%%                %%
%%%%%%%%%%%%%%%%%%%%

\makeatletter
    \@ifpackageloaded{needspace}{}{\usepackage{needspace}}
\makeatother

%%%%%%%%%%%%%%%%%%%%%
%%                 %%
%%    CITATIONS    %%
%%                 %%
%%%%%%%%%%%%%%%%%%%%%
%% Normally, these acts as placeholders for citations

 % Just a placeholder
 % Allows to be 
    % more specific on the citation that wants to be added.
    
%%%%%%%%%%%%%%%%
%%            %%
%%    MATH    %%
%%            %%
%%%%%%%%%%%%%%%%

% Uppercase Greek

\newcommand{\ucmathlist}{%
    \def\alpha{\mathrm{A}}%
    \def\beta{\mathrm{B}}%
    \let\gamma=\Gamma
    \let\delta=\Delta
    \def\epsilon{\mathrm{E}}%
    \def\varepsilon{\mathrm{E}}%
    \def\zeta{\mathrm{Z}}%
    \def\eta{\mathrm{H}}%
    \let\theta=\Theta
    \let\vartheta=\Theta
    \def\iota{\mathrm{I}}%
    \def\kappa{\mathrm{K}}%
    \let\lambda=\Lambda
    \def\mu{\mathrm{M}}%
    \def\nu{\mathrm{N}}%
    \let\xi=\Xi
    \let\pi=\Pi
    \let\varpi=\Pi
    \def\rho{\mathrm{P}}%
    \def\varrho{\mathrm{P}}%
    \let\sigma=\Sigma
    \def\tau{\mathrm{T}}%
    \let\upsilon=\Upsilon
    \let\phi=\Phi
    \let\varphi=\Phi
    \def\chi{\mathrm{X}}%
    \let\psi=\Psi
    \let\omega=\Omega
}

%%%% REQUIRES amsthm PACKAGE %%%%
%   Check if the package has already been loaded, if it hasn't load it
\makeatletter
    \@ifpackageloaded{amsthm}{}{

        \usepackage{amsthm}
    }
\makeatother
\theoremstyle{plain}
    \newtheorem{theorem}{Theorem}
    
    \newtheorem{lemma}[theorem]{Lemma}
    
\theoremstyle{definition}
    \newtheorem{definition}{Definition}
    \newtheorem{remark}{Remark}
    \newtheorem{assumption}{Assumption}
    \newtheorem*{example}{Example}

%% THIS IS A MACRO TO RENEWTHEOREMS %%
\makeatletter
\def\renewtheorem#1{%
    \expandafter\let\csname#1\endcsname\relax
    \expandafter\let\csname c@#1\endcsname\relax
    \gdef\renewtheorem@envname{#1}
    \renewtheorem@secpar
}
\def\renewtheorem@secpar{\@ifnextchar[{\renewtheorem@numberedlike}{\renewtheorem@nonumberedlike}}
\def\renewtheorem@numberedlike[#1]#2{\newtheorem{\renewtheorem@envname}[#1]{#2}}
\def\renewtheorem@nonumberedlike#1{  
    \def\renewtheorem@caption{#1}
    \edef\renewtheorem@nowithin{\noexpand\newtheorem{\renewtheorem@envname}{\renewtheorem@caption}}
    \renewtheorem@thirdpar
}
\def\renewtheorem@thirdpar{\@ifnextchar[{\renewtheorem@within}{\renewtheorem@nowithin}}
\def\renewtheorem@within[#1]{\renewtheorem@nowithin[#1]}
\makeatother

%% END OF MACRO %%
%%%%%%%%%%%%%%%%%%%%%%%%%%%%%%% End arxiv format %%%%%%%%%%%%%%%%%%%%%%%%%%%%%%%%%%%%%%%%%%

\else

%%%%%%%%%%%%%%%%%%%%%%%%%%%%%%% Specify conference format %%%%%%%%%%%%%%%%%%%%%%%%%%%%%%%%%%%%%%
\documentclass[letterpaper, 10 pt, conference]{ieeeconf}
\IEEEoverridecommandlockouts % This command is only needed if you want to use the \thanks command
\overrideIEEEmargins
\usepackage[sort,compress]{cite}
\usepackage{graphicx}
\usepackage{xcolor}
\usepackage{amsmath, amssymb}
\usepackage{commands}
\usepackage{csquotes}
\usepackage[ruled,vlined,linesnumbered]{algorithm2e}
\usepackage{float}
\usepackage{todonotes}
\usepackage{subcaption}
\usepackage{makecell}
\usepackage[normalem]{ulem}
% Overwrite \citep and \citet instead of using natbib in the conference format
\newcommand{\citep}[1]{\cite{#1}}
\newcommand{\citet}[1]{\cite{#1}}

\usepackage{auxFiles/symbolDef}

\makeatletter
\let\NAT@parse\undefined
\makeatother
\usepackage[colorlinks=true, allcolors=blue]{hyperref}
%%%%%%%%%%%%%%%%%%%%%%%%%%%%%%% End conference format %%%%%%%%%%%%%%%%%%%%%%%%%%%%%%%%%%%%%%

\fi

%%%%%%%%%%%%%%%%%%%%%%%%%%%%%%% Title and Author Info %%%%%%%%%%%%%%%%%%%%%%%%%%%%%%%%%%%%%%
\ifniceformat
\title{Learning Flatness-Preserving Residuals for Pure-Feedback Systems}
\else
\title{\LARGE \bf Learning Flatness-Preserving Residuals for Pure-Feedback Systems}
\fi

\author{Fengjun Yang, Jake Welde, and Nikolai Matni% <-this % stops a space
\thanks{F. Yang, J. Welde, and N. Matni are with the GRASP Laboratory, University of Pennsylvania, PA, USA. This work was supported in part by NSF Awards SLES-2331880, ECCS-2045834, ECCS-2231349, and AFOSR Award FA9550-24-1-0102.}%
}

\begin{document}
\maketitle

%%%%%%%%%%%%%%%%%%%%%%%%%%%%%%% Abstract %%%%%%%%%%%%%%%%%%%%%%%%%%%%%%%%%%%%%%
\begin{abstract}%
We study residual dynamics learning for differentially flat systems, where a nominal model is augmented with a learned correction term from data. A key challenge is that generic residual parameterizations may destroy flatness, limiting the applicability of flatness-based planning and control methods. To address this, we propose a framework for learning flatness-preserving residual dynamics in systems whose nominal model admits a pure-feedback form. We show that residuals with a lower-triangular structure preserve both the flatness of the system and the original flat outputs. Moreover, we provide a constructive procedure to recover the flatness diffeomorphism of the augmented system from that of the nominal model.
Building on these insights, we introduce a parameterization of flatness-preserving residuals using smooth function approximators, making them learnable from trajectory data with conventional algorithms.
Our approach is validated in simulation on a 2D quadrotor subject to unmodeled aerodynamic effects. We demonstrate that the resulting learned flat model achieves a tracking error $5\times$ lower than the nominal flat model, while being $20\times$ faster over a structure-agnostic alternative.
%We consider residual dynamics learning, in which a nominal model is augmented with a data-driven perturbation, in the context of differentially flat systems.  We propose an algorithm to learn flatness-preserving residual dynamics for a class of flat pure-feedback systems. We show that a class of lower-triangular disturbances that preserves the pure-feedback structure also preserves flatness and allows straightforward construction of the perturbed flatness diffeomorphisms. We demonstrate that we can parameterize and learn the residual dynamics within this category of disturbances. Our method is validated through simulations of 2D quadrotors, where we show that it achieves performance similar to nonlinear model predictive control while being $20\times$ faster.
\end{abstract}

%%%%%%%%%%%%%%%%%%%%%%%%%%%%%%% Content %%%%%%%%%%%%%%%%%%%%%%%%%%%%%%%%%%%%%%
% \listoftodos
\section{Introduction}
Differential flatness enables the use of simple and efficient linear techniques for trajectory planning and feedback design in nonlinear systems~\citep{fliess1995flatness}. For systems such as quadrotors, the existence of a flatness diffeomorphism allows the system to be transformed into a chain of integrators, greatly simplifying control~\citep{mellinger2011minimum, greeff2018flatness}. However, flatness is typically established for idealized or nominal models~\citep{welde2023role}, which often fail to capture real-world disturbances and unmodeled dynamics that arise in practice. In certain examples, known perturbations or unmodeled components can be incorporated into an augmented flat model~\citep{faessler2017differential}, but there is currently no general methodology for doing so, or for assessing whether these perturbations will preserve flatness.

Motivated by this shortcoming, we propose augmenting a differentially flat nominal model by learning to approximate the unmodeled (i.e., \emph{residual}) dynamics from trajectory data. However, existing residual learning algorithms~\citep{shi2019neural, carron2019data, torrente2021data, chee2022knode} typically use generic functional forms for the residual, which can lead to augmented dynamics that no longer enjoy the flatness property. Furthermore, even if flatness is preserved, it is often unclear how to construct the flatness diffeomorphism for the resulting system, rendering flatness-based algorithms inapplicable. To overcome this, we restrict our attention to a class of differentially flat nominal dynamics that admit a pure-feedback form and derive a sufficient condition under which the residual preserves flatness. We show that when the residual is lower-triangular---a property that preserves the pure-feedback structure of the nominal dynamics---the flatness diffeomorphisms of the augmented system can be constructed using those of the nominal model. We validate our approach on planar quadrotor dynamics, achieving \rewrite{tracking performance comparable to a nonlinear model predictive controller (NMPC)}{significantly lower tracking error than without learning} while requiring an order of magnitude less computation time \rewrite{}{than a structure-agnostic approach that incorporated the learned residual via nonlinear model predictive control (NMPC)}. To the best of our knowledge, this is the first work to address learning flatness-preserving residual dynamics for differentially flat systems.

\vspace{-.5em}
\subsection{Related Work}
\vspace{-.25em}
\subsubsection{Learning Residual Dynamics}
Leveraging learned residual dynamics in the control loop has been shown to improve tracking performance on various robotic systems, including manipulators~\citep{carron2019data} and quadrotors~\citep{shi2019neural, torrente2021data, chee2022knode}. Prior work incorporated residual dynamics by designing system-specific controllers or applying general techniques such as NMPC. In \citet{shi2019neural}, the authors designed a fix-point iteration-based algorithm tailored to quadrotors to compute the control input. In~\citet{carron2019data, torrente2021data, chee2022knode}, the authors compute the control input with NMPC, which is drastically more computationally expensive than flatness-based control methods for quadrotors~\citep{sun2022comparative}. To address these challenges, we restrict our nominal dynamics to a class of differentially flat pure-feedback systems and seek to learn flatness-preserving residuals, thereby retaining the computational benefits of flatness-based approaches while eliminating the need for case-by-case controller design.
% Learning and control with residual dynamics have been explored quite extensively in the recent literature \citep{carron2019data, shi2019neural, torrente2021data, chee2022knode}. In \citet{carron2019data, torrente2021data}, the authors use Gaussian processes to model the residual dynamics and use MPC to plan and control with the learned dynamics. \citet{shi2019neural} uses a neural network to fit the residual dynamics and use spectral normalization to constrain the network's smoothness to ensure tracking stability. \citet{chee2022knode} also uses MPC to compute the control input with the residual dynamics but uses Neural ODE to learn the residual dynamics. All of these works concern general systems and do not consider differential flatness. Thus, they require careful tracking controller design or general control techniques like MPC. It has been shown that nonlinear MPCs are more computationally taxing than flatness controllers \cite{sun2022comparative}.

\subsubsection{Learning to Linearize Nonlinear Systems}
Learning-based techniques have been applied to differentially flat systems to derive linearizing coordinate transformations from data.  Representative examples include seeking to identify flatness diffeomorphisms from input-output data~\citep{sferrazza2016numerical,ma2022identification}, and %data but do not leverage approximate dynamics models. 
%Prior works have also considered the problem of 
directly learning a feedback-linearizing control law~\citep{umlauft2017feedback, westenbroek2019feedback, greeff2020exploiting}. 
% This line of work is complementary to  %While \cite{westenbroek2019feedback, greeff2020exploiting} also leverage approximate dynamics models, they do not learn the residual dynamics and construct the flatness diffeomorphism for the augmented system.
% they assume that the true system shares the relative degree as the nominal system but do not explicitly characterize when the assumption holds.
For pure-feedback systems, adaptive control methods have also been proposed to deal with parametric uncertainty in the dynamics, including model reference adaptive control~\citep{nam1988model} and neural control methods that leverage backstepping~\citep{ge2002adaptive, ge2004adaptive}. \rewrite{}{In a related vein, model-free control techniques exist for flat systems to estimate and reject more general disturbances online \citep{fliess2013model, join2024flatness}.} In contrast to these approaches, which are complementary to our work, we focus on learning residual dynamics from data while ensuring that the resulting augmented model preserves flatness.
% However, these works assume parametric uncertainty and also do not learn residual dynamics.
% Perhaps the most relevant work is in \citet{greeff2020exploiting}. Here, the authors consider the problem of using learning to refine a feedback linearization controller. Specifically, they assume that the nominal and perturbed dynamics are differentially flat and share the same flat output. However, they focus on single-input single-output systems and

\subsection{Contributions}
Our contributions are three-fold:
\begin{itemize}
    \item \textbf{Flatness-preserving residuals:} We identify a broad class of residual dynamics that preserve the differential flatness of pure-feedback form systems (Theorem~\ref{thm: regularity-perturbed}) and show how to systematically construct the flatness diffeomorphisms for the augmented model from those of the nominal model (Theorem~\ref{thm: form-of-pert-flat-map}).
    \item \textbf{Learning algorithm:} We develop a \rewrite{residual learning framework}{parameterization of the residual dynamics} that enforces this structure, enabling flatness-preserving residual models to be learned directly from trajectory data.
    \item \textbf{Empirical validation:} We demonstrate in simulation that our method \rewrite{matches the tracking performance of NMPC}{significantly reduces tracking error relative to the nominal controller} while \rewrite{achieving an order-of-magnitude improvement in computational efficiency}{being an order-of-magnitude more computationally efficient than an NMPC-based structure-agnostic approach}.
\end{itemize}

%Our contribution is three-fold. First, we characterize a class of residual dynamics that preserve the flatness of pure-feedback systems (Theorem~\ref{thm: regularity-perturbed}) and show how to construct the flatness diffeomorphism for the augmented system (Theorem~\ref{thm: form-of-pert-flat-map}). Next, based on these results, we propose an algorithm to learn flatness-preserving residual dynamics from trajectory data. Lastly, we demonstrate in simulation that the proposed approach can achieve low tracking error while being computationally efficient. % on a planar quadrotor system.
\section{Preliminaries and Problem Formulation} \label{sec: problem-formulation}
Consider the nonlinear system
\begin{equation}\label{eq: joint-dynamics}
    \dot\vcx = f(\vcx,\vcu)
\end{equation}
with state $\vcx \in \R^n$ and control input $\vcu \in \R^m$.  We recall the definition of a differentially flat system, as originally introduced by \citet{fliess1995flatness}, but follow the formulation given in \citet{hagenmeyer2003exact}.
%Towards enabling efficient control design, we restrict our attention to nominal models $\bar{f}$ that are differentially flat.
%
\begin{definition}[See \citet{hagenmeyer2003exact}]\label{defn: flatness}
    The nonlinear system~\eqref{eq: joint-dynamics} is \textit{(differentially) flat} if a \textit{flat output}
    \begin{equation*}
        \vcy = \Lambda(\vcx, \vcu, \dot{\vcu}, \ddot{\vcu}, \cdots, \vcu^{(\alpha)})
    \end{equation*}
    and a finite number of its time derivatives can be used to reconstruct the state $\vcx$ and control input $\vcu$:
    \begin{align*}
        \vcx &= \Phi(\vcy, \dot{\vcy}, \ddot{\vcy}, \dots, \vcy^{(\beta)}),\\
        \vcu &= \Psi(\vcy, \dot{\vcy}, \ddot{\vcy}, \dots, \vcy^{(\beta)}).
    \end{align*}
    Here, $\Lambda$, $\Phi$, and $\Psi$ are smooth functions, with $(\Phi,\Psi)$ referred to as the \textit{flatness diffeomorphism}. The quantities $\alpha$ and $\beta$ are finite positive integers, and $\vcz^{(c)}$ denotes the $c$-th time derivative of $\vcz$. We say that this system is \textit{locally} flat around a point $(\vcx^*, \vcu^*)$ if the equalities above hold in a neighborhood $\mathcal N$ of $(\vcx^*, \vcu^*)$.
\end{definition}

As outlined in the introduction, differential flatness enables efficient planning and control by allowing for the nonlinear system~\eqref{eq: joint-dynamics} to be rewritten as an equivalent linear one (see, \textit{e.g.},~\citet{fliess1995flatness, murray1995differential,agrawal2021constructive, greeff2018flatness}). However, this procedure often requires exact knowledge of the dynamics $f$, whereas it is often the case that, in practice, only a nominal model $\bar f$ is known.  This motivates the decomposition
\begin{equation}\label{eq: joint-dynamics-perturbed}
    \dot\vcx = f(\vcx,\vcu) = \bar f(\vcx, \vcu) + \Delta(x,u),
\end{equation}
of the dynamics~\eqref{eq: joint-dynamics} into a \textit{nominal dynamics} component, defined by $\bar f(\vcx,\vcu)$, and a \emph{residual dynamics} component
\begin{equation}\label{eq: residual-defn}
    \Delta(\vcx, \vcu) := f(\vcx, \vcu) - \bar{f}(\vcx, \vcu).
\end{equation}
For example, in the context of quadrotor control, the known nominal model might capture the rigid body dynamics, whereas the unknown residual dynamics might capture harder-to-model aerodynamic effects such as rotor drag \citep{kai2017nonlinear, faessler2017differential, shi2019neural, chee2022knode} or another quadrotor's downwash \citep{jain2019modeling, shi2020neural}. 
% with state $\vcx \in \R^n$, control input $\vcu \in \R^m$, \emph{nominal} dynamics $\bar f$, \emph{residual} dynamics $\Delta$, and \emph{true} dynamics $f:=\bar f + \Delta$.  We assume that the true dynamics $f$ (and residual dynamics $\Delta$) are unknown, and that we only have access to the nominal dynamics $\bar{f}$.  

% The nominal model $\bar f$ defines the nominal system
% \begin{equation}\label{eq: joint-nominal-dynamics}
%     \dot\vcx = \bar{f}(\vcx, \vcu),
% \end{equation}
% which is used for planning and control.
% %which is derived using first principles and may, for example, trade off accuracy for traceability.
% %
% Such nominal models are often endowed with structure that enables efficient control design: with this in mind, we restrict our attention to \emph{differentially flat} nominal models.
% %
% The residual dynamics $\Delta = f - \bar f$ capture the mismatch between the true system~\eqref{eq: joint-dynamics} and the nominal system~\eqref{eq: joint-nominal-dynamics}.  In many scenarios of interest, the mismatch between nominal plans and true 

% To capture the mismatch between the true and the nominal model, we define the \textit{true residual dynamics model} as
% \begin{equation}\label{eq: residual-defn}
%     \Delta(\vcx, \vcu) := f(\vcx, \vcu) - \bar{f}(\vcx, \vcu).
% \end{equation}
% which, by definition, satisfies that
% \begin{equation*}
%     \dot\vcx = f(\vcx, \vcu) = \bar{f}(\vcx, \vcu) + \Delta(\vcx, \vcu).
% \end{equation*}

When these unknown residual dynamics are significant enough to affect control performance, they need to be approximated so that the controller can compensate for their effects.  One common approach is \emph{residual learning}, which leverages data collected from the true system~\eqref{eq: joint-dynamics-perturbed} to learn an estimate $\hat\Delta$ of the residual dynamics~\eqref{eq: residual-defn}, resulting in an \emph{augmented model}
\begin{equation}\label{eq: augment-dynamics-model}
    \hat{f}(\vcx, \vcu) = \bar{f}(\vcx, \vcu) + \hat\Delta(\vcx, \vcu)
\end{equation}
that can be used for improved planning and control.

However, augmenting a differentially flat nominal model $\bar f$ with a learned residual term $\hat \Delta$ will in general not result in a differentially flat augmented model $\hat f = \bar f + \hat \Delta$.
%Indeed, only system and perturbation specific examples existing in the literature~\cite{faessler2017differential}.
%
%residual model would cause the augmented model to no longer be flat, despite the flatness of the nominal model. While special cases exist where the flatness of the nominal model is preserved \cite{faessler2017differential}, they are only derived for specific systems.
%
We therefore seek to identify general parameterizations of residual dynamics $\hat \Delta$ such that the augmented nominal model $\hat f$
% $\hat f = \bar f + \hat \Delta$
is differentially flat if the nominal model $\bar f$ is differentially flat.
Towards that end, we restrict our attention to nominal dynamics that can be expressed in pure-feedback form.

\begin{assumption}[Pure-Feedback Form Nominal Dynamics]\label{assm: nominal-pure-feedback}
    The nominal dynamics $\dot\vcx = \bar f(\vcx,\vcu)$ can be put into a multi-input multi-output nonlinear \textit{pure-feedback} form
    \begin{equation}\label{eq: flat-PFF}
        \begin{aligned}
            \dot \vcx_1 &= \bar{f}_1(\vcx_1, \vcx_2), \\
            \dot \vcx_2 &= \bar{f}_2(\vcx_1, \vcx_2, \vcx_3), \\
            \vdots\\
            \dot \vcx_r &= \bar{f}_r(\vcx_1, \ldots, \vcx_r, \vcu),
        \end{aligned}
    \end{equation}
    where $\vcx_i \in \R^m$ are (vector) sub-states that decompose the joint state $\vcx$ as ${\vcx:=[\vcx_1^\top, \ldots, \vcx_r^\top]^\top}$ and $\vcu\in\R^m$ are the control inputs. The sub-state nominal dynamics $\bar{f}_i$ are respectively of dimensions $\bar{f}_i: \R^{(i+1)m} \to \R^m$. %$\bar{f}_i: \prod_{j=1}^i\R^m \to \R^m$.
\end{assumption}
\begin{remark}\label{rem: uniform-rel-degree}
    Assumption~\ref{assm: nominal-pure-feedback} implicitly assumes that the $m$ sub-states $\vcx_{i, 1}, \ldots, \vcx_{i,m}$ have the same relative degree $r$, and hence all $\vcx_i$ share the same dimension with $\vcu$. We choose this formulation for ease of notation and exposition, but our analysis can be adapted to the case where the sub-states have different relative degrees via dynamic extension \rewrite{(See the running example \eqref{eq: 2dquad-original})}{}.
\end{remark}

Given the above, we next impose a regularity condition which ensures that pure-feedback form nominal dynamics are flat. We then define a parameterization of residual dynamics models ensuring that the resulting augmented model is flat if the nominal model is flat. We also show how to construct flatness diffeomorphisms for the augmented model from those
% using the flatness diffeomorphisms
of the nominal model. Finally, we present our flatness-preserving residual learning algorithm and illustrate it via numerical experiments on a 2D quadrotor system.

% \begin{remark}
%     This is the assumption on the nominal in the previous work \cite{greeff2020exploiting}. While this is a strong assumption on the structure of the unknown true dynamics, there are well-known classes of residual dynamics that satisfy this assumption \cite{faessler2017differential}. Relaxing this assumption is an important direction of future work.
% \end{remark}
\section{Differentially Flat Pure-Feedback Systems} \label{sec: PFF-systems}
We recall the well-known fact that pure-feedback systems \eqref{eq: flat-PFF} are locally differentially flat under suitable regularity conditions \citep{murray1995differential}, and provide a proof sketch (see Appendix~\ref{sec: thm1-proof} for a full proof), as the details will later prove useful in our extension to residual learning. 
\begin{assumption}[Regularity of Nominal Dynamics]\label{assm: nominal-dynamics-regular}
    There exists a neighborhood $\mathcal{N}$ of the desired operating point $(\vcx^*, \vcu^*)$, on which the nominal dynamics \eqref{eq: flat-PFF} satisfy that
    \begin{itemize}
        \item $\bar{f}_i$ are continuously differentiable, and
        \item The following partial Jacobians\footnote{We use $D_\vcz h(\cdot)$ to denote the partial Jacobian matrix of a function $h$ with respect to $\vcz$. In what follows, we also use $Dh(\cdot)$ to denote the full Jacobian of $h$ with respect to all its arguments.} are non-singular.
        \begin{align*}
            % \left\lvert \frac{\partial \bar{f}_i}{\partial \vcx^*_{i+1}}(\vcx^*_1, \ldots,\vcx^*_{i+1}) \right\rvert &\neq 0, \quad i=1, \ldots, r-1,\\
            % \left\lvert \frac{\partial \bar{f}_r}{\partial \vcu}(\vcx^*_1, \ldots,\vcx^*_{r}, \vcu^*) \right\rvert &\neq 0.
            \left\lvert D_{\vcx_{i+1}}\bar{f}_i(\vcx^*_1, \ldots,\vcx^*_{i+1}) \right\rvert &\neq 0, \quad i=1, \ldots, r-1,\\
            \left\lvert D_{\vcu}\bar{f}_r(\vcx^*_1, \ldots,\vcx^*_{r}, \vcu^*) \right\rvert &\neq 0.
        \end{align*}
    \end{itemize}
\end{assumption}
\begin{theorem}\label{thm: regularity-unperturbed}
Let the nominal model $\bar f$ 
% in ~\eqref{eq: flat-PFF} 
satisfy Assumptions~\ref{assm: nominal-pure-feedback} and \ref{assm: nominal-dynamics-regular}.
    Then $\bar f$ is locally differentially flat around $(\vcx^*, \vcu^*)$
    with flat output $\vcy = \vcx_1$.
    % 
    % if the Assumptions~\ref{assm: nominal-pure-feedback} and \ref{assm: nominal-dynamics-regular} are satisfied
    % . Moreover, the flat output is given as $\vcy = \vcx_1$.
\end{theorem}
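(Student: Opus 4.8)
The plan is to reconstruct the entire state $\vcx$ and the input $\vcu$ from the flat output $\vcy = \vcx_1$ and finitely many of its time derivatives, peeling off the pure-feedback chain~\eqref{eq: flat-PFF} one level at a time by repeated application of the implicit function theorem. This simultaneously exhibits the output map $\Lambda$, which here is simply the projection $\Lambda(\vcx) = \vcx_1$ (so $\alpha = 0$), and constructs the diffeomorphism $(\Phi, \Psi)$.

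First I would solve for $\vcx_2$: differentiating $\vcy = \vcx_1$ and using the first row of~\eqref{eq: flat-PFF} gives $\dot\vcy = \bar f_1(\vcy, \vcx_2)$, and since $D_{\vcx_2}\bar f_1$ is nonsingular at $(\vcx_1^*, \vcx_2^*)$ by Assumption~\ref{assm: nominal-dynamics-regular}, the implicit function theorem furnishes a $C^1$ map $\vcx_2 = \phi_2(\vcy, \dot\vcy)$ inverting this relation on a neighborhood of the operating point. I would then proceed by induction: assuming maps $\vcx_j = \phi_j(\vcy, \dot\vcy, \dots, \vcy^{(j-1)})$ have been built for all $j \le i$ with $1 \le i \le r$, I differentiate the topmost one to obtain $\dot\vcx_i = \sum_{k=0}^{i-1} D_{\vcy^{(k)}}\phi_i \, \vcy^{(k+1)}$, an explicit function of $\vcy, \dots, \vcy^{(i)}$. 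Equating this with the $i$-th row $\dot\vcx_i = \bar f_i(\vcx_1, \dots, \vcx_i, \vcx_{i+1})$ and substituting the already-known $\vcx_1, \dots, \vcx_i$ leaves $\vcx_{i+1}$ (or $\vcu$, when $i = r$) as the only unknown; the nonsingularity of $D_{\vcx_{i+1}}\bar f_i$ (resp.\ $D_\vcu \bar f_r$) in Assumption~\ref{assm: nominal-dynamics-regular} lets the implicit function theorem solve for $\vcx_{i+1} = \phi_{i+1}(\vcy, \dots, \vcy^{(i)})$ (resp.\ $\vcu = \Psi(\vcy, \dots, \vcy^{(r)})$). Stacking $\Phi := [\phi_1^\top, \dots, \phi_r^\top]^\top$ with $\phi_1(\vcy) = \vcy$, and taking $\Psi$ from the final step, yields the reconstruction maps with $\beta = r$.

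The main obstacle I anticipate is bookkeeping rather than conceptual. I must verify that the chain-rule differentiation at stage $i$ introduces exactly one new derivative order, so that $\phi_{i+1}$ depends on $\vcy$ only up to order $i$ and the construction terminates cleanly at $\vcy^{(r)}$; and, more delicately, that the sequence of neighborhoods returned by the successive invocations of the implicit function theorem can be intersected into a single common neighborhood $\mathcal N$ of $(\vcx^*, \vcu^*)$ on which every $\phi_i$ and $\Psi$ is simultaneously defined and continuously differentiable. The differentiability of each map is inherited directly from the implicit function theorem applied to the $C^1$ data of Assumption~\ref{assm: nominal-dynamics-regular}, so once the nested neighborhoods and derivative orders are threaded carefully, the result follows.
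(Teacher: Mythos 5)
Your proposal is correct and follows essentially the same route as the paper: a recursive application of the implicit function theorem that peels off the pure-feedback chain one level at a time, combined with chain-rule differentiation and an intersection of the resulting neighborhoods. The only cosmetic difference is that the paper first constructs the inverse maps $h_k$ as standalone functions of $(\vcx_1,\ldots,\vcx_k,\dot\vcx_k)$ in a separate lemma and then composes them with the previously built $\Phi_j$, whereas you apply the implicit function theorem directly to the equation already expressed in the flat-output variables; both are sound.
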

\begin{proof}[Proof Sketch]
By the implicit function theorem, the regularity of $\bar{f}_i$ guarantees that one can locally solve for $\vcx_{i+1}$ in terms of $\vcx_1, \ldots, \vcx_i$ and $\dot\vcx_i$. That is, locally, there exists a unique, continuously differentiable function $h_i$ such that
\begin{equation*}
    h_i(\vcx_1, \ldots, \vcx_i, \dot\vcx_i) = \vcx_{i+1}.
\end{equation*}
Thus, starting from $\vcy = \vcx_1$, one can locally solve for the state and control variables recursively. For example, the state $\vcx_2$ is given by
\begin{equation*}
    \vcx_2 = h_1(\vcx_1, \dot\vcx_1) = h_1(\vcy, \dot\vcy).
\end{equation*}
Noting that, by the chain rule, $\dot\vcx_2$ satisfies
\begin{equation*}
    % \dot\vcx_2 = \frac{d}{dt}h_1(\vcy, \dot\vcy) = \nabla h_1^\top
    \dot\vcx_2 = \frac{d}{dt}h_1(\vcy, \dot\vcy) = Dh_1(\vcy, \dot\vcy)
    \begin{bmatrix}
        \dot\vcy \\ \ddot\vcy
    \end{bmatrix},
\end{equation*}
one can solve for $\vcx_3$ as a function of $\vcy, \dot\vcy$ and $\ddot\vcy$. Recursively applying this procedure to $\vcx_1,\dots,\vcx_r$, one can construct the flatness diffeomorphism $(\Phi, \Psi)$. 
\end{proof}
Many well-known flat systems (\textit{e.g.}, unicycles and fully-actuated manipulators\cite{rathinam1998configuration}) can be placed into a form that complies with Assumptions~\ref{assm: nominal-pure-feedback} and \ref{assm: nominal-dynamics-regular} under suitable dynamic extensions.
% \footnote{\rewrite{The need for dynamic extensions can be avoided if we consider pure-feedback systems whose sub-states do not have uniform relative degrees (as described in Remark~\ref{rem: uniform-rel-degree}).}{}}
Furthermore, while Theorem~\ref{thm: regularity-unperturbed} characterizes local differential flatness around an operating point, many well-known flat systems are differentially flat over a large subset of the state and control spaces. % a much richer set of state and control inputs. 
%% We the 2D quadrotor (Figure~\ref{fig: 2d-quad-graphic}) as a running example.
%
% \begin{example}(Fully actuated manipulator)
%     A fully actuated manipulator has the dynamics
%     \begin{equation*}
%         M(q)\ddot{q} + C(q, \dot{q})\dot q + G(q) = \tau,
%     \end{equation*}
%     where $q$ is its configuration. Under \todo{what?} reasonable assumptions, $M(q)$ is invertible, and these dynamics can be rewritten as
%     \begin{align*}
%         \frac{d}{dt}q = \dot q, \quad
%         \frac{d}{dt}\dot q = M(q)^{-1}\left( \tau - C(q, \dot{q})\dot q + G(q) \right),
%     \end{align*}
%     which satisfies our assumptions.
% \end{example}
%
\ifniceformat
\begin{wrapfigure}{r}{0.42\textwidth}
    \centering
    \includegraphics[width=0.9\linewidth]{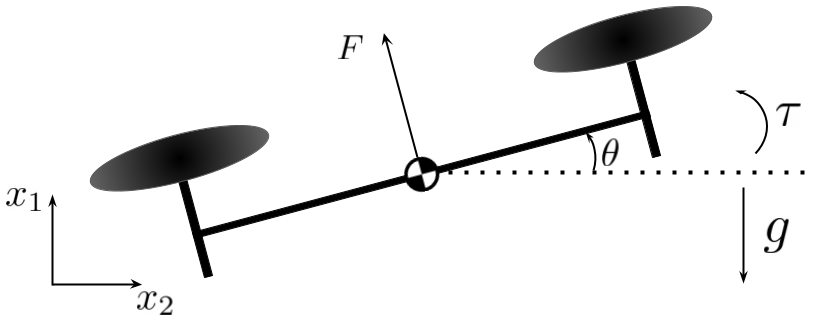}
    \caption{Diagram of the 2D Quadrotor System.}
    \label{fig: 2d-quad-graphic}
\end{wrapfigure}
\else
\begin{figure}[t]
    \centering
    \includegraphics[width=0.9\linewidth]{img/2dquad-graphic.png}
    \caption{Diagram of the 2D Quadrotor System.}
    \label{fig: 2d-quad-graphic}
\end{figure}
\fi
\begin{example}[Running Example: 2D Quadrotor]
    Consider a quadrotor constrained to move only within a vertical plane (as shown in Fig.~\ref{fig: 2d-quad-graphic}). A common nominal model for this system's dynamics (derived from first principles) is given by
    \begin{equation}\label{eq: 2dquad-original}
        \dot \vcp = \vcv, \quad \mathfrak{m}\dot \vcv = \begin{bmatrix} -F\sin\theta \\ F\cos\theta\end{bmatrix} - \vcg, \quad
        \dot \theta = \omega, \quad  I\dot \omega = \tau.
    \end{equation}
    Here $\vcp, \vcv \in \R^2$ are the position and linear velocity of the quadrotor. $\theta \in [-\pi, \pi)$ is its orientation, and $\omega \in \R$ its angular velocity. Its mass and moment of inertia are given as positive scalar constants $\mathfrak m$ and $I$. The control inputs are the combined thrust $F \in \R$ and torque $\tau \in \R$, and $\vcg = [0\;g]^\top$ denotes gravitational acceleration. This system can be written in the pure-feedback form \eqref{eq: flat-PFF} by dynamically extending the thrust $F$ twice, resulting in the states and control action
    \begin{equation*}
        \vcx_1 = \vcp,\; \vcx_2=\vcv,\; \vcx_3 = [F, \theta]^\top,\; \vcx_4 = [\dot F, \omega]^\top, \vcu = [\ddot F, \tau]
    \end{equation*}
    and the dynamics%
    \begin{alignat}{3}
            &\dot \vcx_1 = \bar{f}_1(\vcx_2) = \vcx_2, \ \  &&\dot\vcx_2 = \bar{f}_2(\vcx_3) &&= 
            \frac{1}{\mathfrak m}\begin{bmatrix} -F\sin\theta 
            \\ F\cos\theta - \mathfrak{m}g \end{bmatrix}, \nonumber \\
            &\dot \vcx_3 = \bar{f}_3(\vcx_4) = \vcx_4, \ \  &&\dot \vcx_4 = \bar{f}_4(\vcu) &&= \begin{bmatrix} 1 & 0 \\ 0 & I^{-1} \end{bmatrix}\vcu.\qquad
        % \end{aligned}
            \label{eq: 2dquad}
    \end{alignat}
    By Theorem~\ref{thm: regularity-unperturbed}, it is clear that the system is locally differentially flat with flat output $\vcy = \vcx_1$ except for when $F=0$.
\end{example}
\section{Flatness-Preserving Perturbations}\label{sec: flat-preserve-res}
%We give a sufficient condition for a flat pure-feedback system satisfying Theorem~\ref{thm: regularity-unperturbed} to remain differentially flat under additive residual dynamics. We show that this is achieved when the residual dynamics preserve the pure feedback structure of the nominal dynamics. Further, this class of residual dynamics allows for an easy way to construct the flatness diffeomorphism ${(\Phi, \Psi)}$, which ultimately informs our learning algorithm.
% in the next section.
%
We show that flat nominal dynamics satisfying the assumptions of Theorem~\ref{thm: regularity-unperturbed} remain differentially flat under a class of additive residual dynamics that preserve the pure-feedback structure. Further, this class of residual dynamics allows us to construct the flatness diffeomorphisms of the augmented system from those of the nominal system.
% Specifically, we consider residual dynamics models $\Delta(x)$ that are smooth and can be put into the form
% \begin{equation}\label{eq: perturbed-PFF}
%     \Delta(\vcx) = 
%     \begin{bmatrix}
%         \Delta_1(\vcx_1) \\
%         \Delta_2(\vcx_1, \vcx_2) \\
%         \vdots\\
%         \Delta_r(\vcx_1, \ldots, \vcx_r)
%     \end{bmatrix}.
% \end{equation}
% We instantiate a residual model of the rotor drag of this form in the running example at the end of this section. We first prove that the residual model of this form preserves the pure-feedback structure and regularity of the nominal dynamics and thus its flatness.

We first define a class of \emph{lower-triangular} residual dynamics that are compatible with the pure-feedback form of the nominal dynamics~\eqref{eq: flat-PFF}.
\begin{definition}[Lower-Triangular Residual Dynamics]\label{defn: residual-strict-feedback}
      Residual dynamics $\Delta(\vcx)$ that can be expressed as
    \begin{equation}\label{eq: perturbed-PFF}
        \Delta(\vcx) = 
        \begin{bmatrix}
            \Delta_1(\vcx_1) \\
            \Delta_2(\vcx_1, \vcx_2) \\
            \vdots\\
            \Delta_r(\vcx_1, \ldots, \vcx_r)
        \end{bmatrix},
    \end{equation}
    for continuously differentiable $\Delta_i$, $i=1,\dots,r$, are called \emph{lower-triangular}.
\end{definition}
%
% Since residual dynamics of the form \eqref{eq: perturbed-PFF} do not explicitly depend on $\vcu$, in what follows, we write them instead as $\Delta(\vcx)$ for clarity.
While this structure may seem overly restrictive, particular instances satisfying these assumptions have already been identified in the literature. We highlight one such example in the context of unmodeled aerodynamics for aerial vehicles at the end of this section.

The next theorem shows that restricting ourselves to lower-triangular residual dynamics, which preserve the pure-feedback structure and regularity of the nominal dynamics~\eqref{eq: flat-PFF}, ensures the flatness of the resulting augmented model.
% In what follows, we prove that the true dynamics are differentially flat under this assumption and give a specific instantiation of the residual in the form of rotor drag for our 2D quadrotor running example. 
%
\begin{theorem}\label{thm: regularity-perturbed}
    Let the nominal model $\bar f$ satisfy Assumptions~\ref{assm: nominal-pure-feedback} and \ref{assm: nominal-dynamics-regular}
    and let $\hat\Delta$ be lower-triangular.  Then 
    % for any lower-triangular residual $\hat\Delta$, 
    the augmented dynamics $\bar f + \hat\Delta$ are locally differentially flat around $(\vcx^*, \vcu^*)$ with flat output $\vcy = \vcx_1$.
\end{theorem}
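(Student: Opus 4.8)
The plan is to show that the augmented dynamics $\hat f = \bar f + \hat\Delta$ again satisfy Assumptions~\ref{assm: nominal-pure-feedback} and \ref{assm: nominal-dynamics-regular}, so that Theorem~\ref{thm: regularity-unperturbed} applies verbatim and yields local flatness with flat output $\vcy = \vcx_1$. The whole argument rests on the observation that a lower-triangular residual is exactly what leaves both the pure-feedback dependency structure and the relevant coupling Jacobians intact.

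First, I would verify that $\hat f$ is in pure-feedback form. Writing $\hat f_i := \bar f_i + \hat\Delta_i$, the $i$-th augmented row depends on $(\vcx_1, \ldots, \vcx_{i+1})$ for $i<r$ and on $(\vcx_1,\ldots,\vcx_r,\vcu)$ for $i=r$, since by Definition~\ref{defn: residual-strict-feedback} each $\hat\Delta_i$ depends only on $(\vcx_1, \ldots, \vcx_i)$, a subset of the variables already appearing in $\bar f_i$. Thus no new substate or input dependence is introduced, and Assumption~\ref{assm: nominal-pure-feedback} holds for $\hat f$ with the same block decomposition.

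Second, I would check the regularity condition for $\hat f$. Continuous differentiability of each $\hat f_i$ is immediate from that of $\bar f_i$ and $\hat\Delta_i$. The key step is that, because $\hat\Delta_i$ is independent of $\vcx_{i+1}$ (and $\hat\Delta_r$ independent of $\vcu$), the partial derivative of the residual in the relevant direction vanishes, so the coupling Jacobians are unchanged:
\begin{equation*}
    D_{\vcx_{i+1}}\hat f_i = D_{\vcx_{i+1}}\bar f_i \ \ (i<r), \qquad D_{\vcu}\hat f_r = D_{\vcu}\bar f_r,
\end{equation*}
as functions on $\mathcal{N}$. These Jacobians therefore inherit the non-singularity guaranteed for $\bar f$ by Assumption~\ref{assm: nominal-dynamics-regular}, in particular at $(\vcx^*, \vcu^*)$, so $\hat f$ satisfies Assumption~\ref{assm: nominal-dynamics-regular}.

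With both assumptions established for $\hat f$, Theorem~\ref{thm: regularity-unperturbed} directly gives local differential flatness of $\bar f + \hat\Delta$ about $(\vcx^*, \vcu^*)$ with $\vcy = \vcx_1$. I do not expect a genuine obstacle in the argument itself: the difficulty has been front-loaded into the choice of the lower-triangular class, which is engineered precisely so that the implicit-function-theorem inversions used in the proof sketch of Theorem~\ref{thm: regularity-unperturbed} remain valid for the augmented system. The only point deserving a word of care is that non-singularity at the operating point propagates to a neighborhood by continuity of the $C^1$ maps—but this is already how Assumption~\ref{assm: nominal-dynamics-regular} is invoked in the unperturbed case, so it transfers without modification.
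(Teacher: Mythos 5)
Your proposal is correct and follows essentially the same route as the paper: the lower-triangular residual preserves the pure-feedback structure and leaves the coupling Jacobians $D_{\vcx_{i+1}}\bar f_i$ and $D_{\vcu}\bar f_r$ unchanged, so Theorem~\ref{thm: regularity-unperturbed} applies directly. Your writeup is somewhat more explicit than the paper's (which leaves the $D_\vcu \bar f_r$ case and the $C^1$ check implicit), but the argument is identical.
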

\begin{proof}
Due to the lower-triangular structure of the residual, the perturbed dynamics $\bar f + \hat\Delta$ are still in pure-feedback form, and share relevant Jacobians with the nominal dynamics:
%%%%%%%%%%%%%%%%%%%%%
\ifniceformat
\begin{equation*}
    D_{\vcx_{i+1}}\Big( \bar{f}_i(\vcx_1, \ldots,\vcx_{i+1}) + \hat\Delta_i(\vcx_1, \ldots,\vcx_i) \Big) \\=  D_{\vcx_{i+1}} \bar{f}_i(\vcx_1, \ldots,\vcx_{i+1}),
\end{equation*}
\else
\begin{multline*}
    D_{\vcx_{i+1}}\Big( \bar{f}_i(\vcx_1, \ldots,\vcx_{i+1}) + \hat\Delta_i(\vcx_1, \ldots,\vcx_i) \Big) \\=  D_{\vcx_{i+1}} \bar{f}_i(\vcx_1, \ldots,\vcx_{i+1}),
\end{multline*}
\fi
%%%%%%%%%%%%%%%%%%%%%%
as $\hat\Delta_i$ does not depend on $\vcx_{i+1}$. Thus local differential flatness of the perturbed dynamics $\bar f + \hat\Delta$ follows directly from Theorem~\ref{thm: regularity-unperturbed}.
\end{proof}

Note that this class of residual dynamics also preserves the flat outputs of the nominal system. This is desirable, as the flat outputs often carry semantic meaning that benefits downstream planning and control tasks.

We now show how to construct flatness diffeomorphisms for the augmented system given access to the flatness diffeomorphisms of the nominal system.
%the true system. We assume that we know the following information about the nominal dynamics.
\begin{assumption}\label{assm: nominal-flat-map}
    We assume we know the flatness diffeomorphisms of the nominal system \eqref{eq: flat-PFF}. In particular, we assume that the set of continuously differentiable functions ${h_k: \R^{(k+1)m}\to\R^m}$ that satisfy
    %%%%%%%%%%%%%%%%%%%%%
    \ifniceformat
    \begin{equation}\label{eq: defn-of-g}
        \begin{aligned}
            &h_k\left(\vcx_1, \ldots, \vcx_{k}, \bar{f}_k(\vcx_1, \ldots, \vcx_{k+1})\right) = \vcx_{k+1}, \qquad k = 1, \ldots, r-1,\\
            &h_{r}\left(\vcx_1, \ldots, \vcx_{k}, \bar{f}_r(\vcx_1, \ldots, \vcx_{r}, \vcu)\right) = \vcu
        \end{aligned}
    \end{equation}
    \else
    \begin{equation}\label{eq: defn-of-g}
        \begin{aligned}
            &h_k\left(\vcx_1, \ldots, \vcx_{k}, \bar{f}_k(\vcx_1, \ldots, \vcx_{k+1})\right) = \vcx_{k+1}, \\
            &\qquad\qquad\qquad\qquad\qquad\qquad k = 1, \ldots, r-1,\\
            &h_{r}\left(\vcx_1, \ldots, \vcx_{k}, \bar{f}_r(\vcx_1, \ldots, \vcx_{r}, \vcu)\right) = \vcu
        \end{aligned}
    \end{equation}
    \fi
    for all ${(\vcx, \vcu) \in \mathcal{N}}$ for the nominal system~\eqref{eq: flat-PFF} is known.
\end{assumption}
As seen in Theorem~\ref{thm: regularity-unperturbed}, the implicit function theorem guarantees the existence of the maps $h_i$ under Assumption~\ref{assm: nominal-dynamics-regular}, but does not provide a way to construct them.  Fortunately, such functions have already been identified for many flat systems of interest (see, \textit{e.g.},~\citet{murray1995differential}).%, including the 2D quadrotors, not only around an operating point but in a richer set of states and control inputs.
\begin{example}[Running Example: 2D Quadrotor]
    For the pure-feedback 2D quadrotor \eqref{eq: 2dquad}, the desired maps described in Assumption~\ref{assm: nominal-flat-map} are given by
\begin{equation*}
    \begin{aligned}
    &h_1(\vcx_1, \dot\vcx_1) = \dot\vcx_1, \\
    &h_2(\vcx_1, \vcx_2, \dot\vcx_2) =
    % \begin{bmatrix}
    %     \sqrt{(F\sin\theta)^2 + (F\cos\theta)^2} \\
    %     \arctan2(F\sin\theta, F\cos\theta)
    % \end{bmatrix} =
    \begin{bmatrix}
        \mathfrak m \sqrt{(\dot\vcx_{2, 1})^2 + (\dot\vcx_{2,2} + g)^2} \\
        \arctan2(-\dot\vcx_{2, 1},\dot\vcx_{2,2} + g)
    \end{bmatrix},\\
    &h_3(\vcx_1, \vcx_2, \vcx_3, \dot\vcx_3) = \dot\vcx_3,\\
    &h_4(\vcx_1, \vcx_2, \vcx_3, \vcx_4, \dot\vcx_4) = \begin{bmatrix} 1 & 0 \\ 0 & I \end{bmatrix} \dot\vcx_4,
    \end{aligned}
\end{equation*}
which are valid over the set $\{(\vcx, \vcu) \in \R^{10}:\vcx_{3, 1} = F > 0, \vcx_{3, 2} \in [-\pi, \pi)\}$.
\end{example}
Similar to the proof of Theorem~\ref{thm: regularity-unperturbed}, we construct the flatness diffeomorphisms for the augmented system recursively.
\begin{theorem}\label{thm: form-of-pert-flat-map}
    Let the assumptions of Theorem~\ref{thm: regularity-perturbed} hold. Define the continuously differentiable functions $\hat\Phi_i,$ in terms of the functions $h_i$ given in equation~\eqref{eq: defn-of-g} as follows:
    %%%%%%%%%%%%%%%%%%%%%%%%%%%%%
    \ifniceformat
    \begin{equation}\label{eq: pert-flat-map-construction}
    \begin{aligned}
        \hat\Phi_1(\vcy) &= \vcy, \\
        \hat\Phi_k(\vcy,\ldots,\vcy^{(k-1)}) &= h_{k-1}\left(\hat\Phi_1, \ldots, \hat\Phi_{k-1},
        D\hat\Phi_{k-1}
        \begin{bmatrix}
            \dot\vcy \\ \vdots \\ \vcy^{(k-1)}.
        \end{bmatrix} - \hat\Delta_{k-1}(\hat\Phi_1, \ldots, \hat\Phi_{k-1}) \right),
    \end{aligned}
    \end{equation}
    \else
    \begin{equation}\label{eq: pert-flat-map-construction}
    \begin{aligned}
        &\hat\Phi_k(\vcy,\ldots,\vcy^{(k-1)}) = h_{k-1}\Big(\hat\Phi_1, \ldots, \hat\Phi_{k-1},\\
        &\qquad\qquad D\hat\Phi_{k-1}
        \begin{bmatrix}
            \dot\vcy \\ \vdots \\ \vcy^{(k-1)}.
        \end{bmatrix} - \hat\Delta_{k-1}(\hat\Phi_1, \ldots, \hat\Phi_{k-1}) \Big),
        % &\tilde\Phi_i(\vcy,\ldots,\vcy^{(i-1)}) = h_i\Big(\tilde\Phi_1(\vcy), \ldots, \tilde\Phi_{i-1}(\vcy,\ldots,\vcy^{(i-2)}),\\
        % &\qquad \nabla\tilde\Phi_{i-1}(\vcy, \ldots,\vcy^{(i-2)})
        % \begin{bmatrix}
        %     \dot\vcy \\ \vdots \\ \vcy^{(i-1)}.
        % \end{bmatrix} - \Delta_{i-1}(\tilde\Phi_1, \ldots, \tilde\Phi_{i-1})
        % \Big),
    \end{aligned}
    \end{equation}
    \fi%
    %%%%%%%%%%%%%%%%%%%%%%%%%%%%%
    for $k=2, \ldots, r+1$, wherein we omit the arguments of the $\hat\Phi_k$ on the right-hand side to reduce notational clutter.
    % Here, $Dh(\cdot)$ denotes the Jacobian matrix of $h$ with respect to all arguments.
    %are functions of appropriate time derivatives of $\vcy$, which we omitted for the simplicity of notation. Under Assumptions~{\ref{assm: nominal-pure-feedback}-\ref{assm: nominal-flat-map}}, the true dynamics satisfy that
    %For a perturbed dynamics \eqref{eq: perturbed-PFF} that satisfies Theorem~\ref{thm: regularity-perturbed}, we have that
    Then, under the augmented dynamics $\bar f + \hat \Delta$, we have
    \begin{equation*}
        \begin{aligned}
            &\vcx_k = \hat\Phi_k(\vcy, \ldots, \vcy^{(k-1)}),\quad k=1,\ldots,r\\
            &\vcu = \hat\Phi_{r+1}(\vcy, \ldots, \vcy^{(r)})
        \end{aligned}
    \end{equation*}
    for $(\vcx, \vcu) \in \mathcal{N}$.
\end{theorem}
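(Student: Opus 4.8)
The plan is to prove both identities by induction on $k$, exploiting the fact that the augmented dynamics $\bar f + \hat\Delta$ retain the pure-feedback structure (as established in the proof of Theorem~\ref{thm: regularity-perturbed}), so that the \emph{nominal} inversion maps $h_k$ from Assumption~\ref{assm: nominal-flat-map} remain the correct objects to use. Throughout I keep $\vcy = \vcx_1$ and restrict to $(\vcx,\vcu) \in \mathcal N$, where every $h_k$ is defined. The base case $k=1$ is immediate since $\hat\Phi_1(\vcy) = \vcy = \vcx_1$ by the choice of flat output.

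For the inductive step, suppose $\vcx_j = \hat\Phi_j(\vcy,\ldots,\vcy^{(j-1)})$ holds for all $j \le k-1$. The $(k-1)$-th block of the augmented dynamics reads $\dot\vcx_{k-1} = \bar{f}_{k-1}(\vcx_1,\ldots,\vcx_k) + \hat\Delta_{k-1}(\vcx_1,\ldots,\vcx_{k-1})$, and the crucial structural point is that $\hat\Delta_{k-1}$ does \emph{not} depend on $\vcx_k$ (precisely the lower-triangularity of the residual). Rearranging isolates the nominal term, $\bar{f}_{k-1}(\vcx_1,\ldots,\vcx_k) = \dot\vcx_{k-1} - \hat\Delta_{k-1}(\vcx_1,\ldots,\vcx_{k-1})$, and applying the defining property~\eqref{eq: defn-of-g} of $h_{k-1}$ then yields $\vcx_k = h_{k-1}\big(\vcx_1,\ldots,\vcx_{k-1}, \dot\vcx_{k-1} - \hat\Delta_{k-1}(\vcx_1,\ldots,\vcx_{k-1})\big)$.

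It remains to re-express this in terms of $\vcy$ and its derivatives. Substituting the inductive hypothesis into the first $k-1$ arguments and into the residual converts them into $\hat\Phi_1,\ldots,\hat\Phi_{k-1}$. For $\dot\vcx_{k-1}$, I differentiate the identity $\vcx_{k-1} = \hat\Phi_{k-1}(\vcy,\ldots,\vcy^{(k-2)})$ in time; by the chain rule this equals $D\hat\Phi_{k-1}$ acting on the stacked vector $[\dot\vcy^\top,\ldots,(\vcy^{(k-1)})^\top]^\top$, with the argument list shifted up by one derivative. Matching these substitutions against the definition~\eqref{eq: pert-flat-map-construction} of $\hat\Phi_k$ gives exactly $\vcx_k = \hat\Phi_k(\vcy,\ldots,\vcy^{(k-1)})$, closing the induction for $k=2,\ldots,r$. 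The control input follows identically at $k=r+1$: the final block $\dot\vcx_r = \bar{f}_r(\vcx_1,\ldots,\vcx_r,\vcu) + \hat\Delta_r(\vcx_1,\ldots,\vcx_r)$ is inverted by $h_r$ in the same manner, giving $\vcu = \hat\Phi_{r+1}(\vcy,\ldots,\vcy^{(r)})$.

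I expect the only delicate point to be the time-differentiation bookkeeping, namely verifying that $\frac{d}{dt}\hat\Phi_{k-1}(\vcy,\ldots,\vcy^{(k-2)})$ produces precisely $D\hat\Phi_{k-1}$ applied to $[\dot\vcy^\top,\ldots,(\vcy^{(k-1)})^\top]^\top$, which is what guarantees that $\hat\Phi_k$ depends on $\vcy$ through $\vcy^{(k-1)}$ and no higher. A secondary technical check is that the argument $\dot\vcx_{k-1} - \hat\Delta_{k-1}$ always lies in the domain of $h_{k-1}$; this holds because along any trajectory in $\mathcal N$ that quantity equals $\bar{f}_{k-1}(\vcx_1,\ldots,\vcx_k)$, which is exactly the set on which~\eqref{eq: defn-of-g} defines $h_{k-1}$. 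Everything else is a direct consequence of the lower-triangular structure and the nominal inversion property.
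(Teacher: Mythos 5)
Your proposal is correct and follows essentially the same argument as the paper: induction on $k$, rearranging the $(k-1)$-th block of the augmented dynamics to isolate $\bar f_{k-1}$, applying the nominal inversion map $h_{k-1}$, and substituting the inductive hypothesis together with the chain-rule expression for $\dot\vcx_{k-1}$. Your additional remarks on the differentiation bookkeeping and the domain of $h_{k-1}$ are sound elaborations of points the paper leaves implicit.
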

\begin{proof}
    We prove the claim by induction on $k$. For the base case $k=1$, we have from Theorem~\ref{thm: regularity-perturbed} that $\vcy = \vcx_1$, and thus ${\vcx_1 = \hat\Phi_1(\vcy) = \vcy}$. For the induction step, assume that for all $k' < k \leq r$, we have
    \begin{equation*}
        \vcx_{k'} = \hat\Phi_{k'}(\vcy, \ldots, \vcy^{(k'-1)}).
    \end{equation*}
    It suffices to show that the equality also holds for ${\hat\Phi_k}$. Note that \textit{under the augmented dynamics}, we have
    \begin{equation*}
        \bar{f}_{k-1}(\vcx_1, \ldots, \vcx_{k}) = \dot\vcx_{k-1} - \hat\Delta_{k-1}(\vcx_1, \ldots, \vcx_{k-1}).
    \end{equation*}
    Thus, by the definition of $h_i$ \eqref{eq: defn-of-g},
    \begin{equation*}
        h_{k-1}\left(\vcx_1, \ldots, \vcx_{k-1}, \dot\vcx_{k-1} - \hat\Delta_{k-1}(\vcx_1, \ldots, \vcx_{k-1})\right) = \vcx_k.
    \end{equation*}
    From the induction hypothesis, we recover the expression in \eqref{eq: pert-flat-map-construction}, and thus
    \begin{equation*}
        \vcx_{k} = \hat\Phi_{k}(\vcy, \ldots, \vcy^{(k-1)}).
    \end{equation*}
    Since $\hat\Phi_{k'}, \Delta_{k-1}$, and $h_{k-1}$ are continuously differentiable, ${\hat{\Phi}_k}$ is also continuously differentiable. With a slight abuse of notation, the desired result for $\hat\Psi$ follows from the same argument if we denote $\vcx_{r+1}:= \vcu$.
\end{proof}
\begin{example}[Running Example: 2D Quadrotor]
    Consider the lower-triangular perturbation 
    \begin{equation*}
        \Delta_1 = \Delta_3 = \Delta_4 = 0,\quad
        \Delta_2(\vcx_2) = -\frac{C_r}{\mathfrak{m}}\vcv = -\frac{C_r}{\mathfrak{m}}\vcx_2
    \end{equation*}
    to the \emph{nominal} 2D quadrotor system~\eqref{eq: 2dquad}, capturing an isotropic linear drag similar to rotor drag \citep{faessler2017differential}. The resulting flat map can be constructed per \eqref{eq: pert-flat-map-construction} as
    \begin{align*}
        \hat\Phi_1(\vcy) &= \vcy,\\
        \hat\Phi_2(\vcy, \dot\vcy)
        % &= h_1\left(\tilde\Phi_1(\vcy), \nabla\tilde\Phi_1(\vcy)\cdot\dot\vcy - \Delta_1(\vcy)\right) = \dot\vcy,\\
        &= h_1\left(\hat\Phi_1(\vcy), D\hat\Phi_1(\vcy)\;\dot\vcy - \Delta_1(\vcy)\right) = \dot\vcy,\\
        % \tilde\Phi_3(\vcy, \dot\vcy, \ddot\vcy) &= h_2\left(\tilde\Phi_1, \tilde\Phi_2, \nabla\tilde\Phi_2\cdot[\dot\vcy\;\;\ddot\vcy]^\top - \Delta_2\right)\\
        \hat\Phi_3(\vcy, \dot\vcy, \ddot\vcy) &= h_2\left(\hat\Phi_1, \hat\Phi_2, D\hat\Phi_2\;[\dot\vcy\;\;\ddot\vcy]^\top - \Delta_2\right)\\
        &= \begin{bmatrix}
        \mathfrak{m}\sqrt{(\ddot\vcy_{1} + \frac{C_r}{\mathfrak{m}}\dot\vcy_1)^2 + (\ddot\vcy_{2} + g + \frac{C_r}{\mathfrak{m}}\dot\vcy_2)^2} \\
        % \arctan2(-\ddot\vcy_{1} + \frac{C_r}{\mathfrak{m}}\dot\vcy_1,\ddot\vcy_{2} + g + \frac{C_r}{\mathfrak{m}}\dot\vcy_2)\\
        \rewrite{}{\tan^{-1}\left(\left( \ddot\vcy_{2} + g + \frac{C_r}{\mathfrak{m}}\dot\vcy_2 \right) / \left(-\ddot\vcy_{1} + \frac{C_r}{\mathfrak{m}}\dot\vcy_1\right) \right)}
    \end{bmatrix}.
    \end{align*}
    $\hat\Phi_4$ and $\hat\Psi$ can then be found by taking time derivative of $\hat\Phi_3$, which we omit due to space constraints.
\end{example}

\section{Learning Flatness-Preserving Residuals} \label{sec: learning}
% We return to our central question: how can we parameterize and learn the residual dynamics in a way that preserves the differential flatness of the nominal system? In what follows, we leverage the results in Section~\ref{sec: flat-preserve-res} to give a precise algorithm to solve the informal optimization problem \eqref{eq: informal-opt}.
The results in Section~\ref{sec: flat-preserve-res} identify a rich class of residual dynamics (Def.~\ref{defn: residual-strict-feedback}) that can be added to a flat nominal system in pure-feedback form so as to define augmented systems that are also differentially flat (Thm.~\ref{thm: regularity-perturbed}), and for which the corresponding flatness diffeomorphisms can be easily constructed given those of the nominal system (Thm.~\ref{thm: form-of-pert-flat-map}).  We leverage these insights to propose an algorithm for learning flatness-preserving residuals from data.

In particular, we assume access to a set of trajectories ${\vctau = \{\vctau^i\}_{i=1}^{N}}$. Each trajectory $\vctau^i$ consists of a sequence of state and control input pairs
\begin{equation*}
\vctau^i = \Big( \big(\vcx^i(t_1), \vcu^i(t_1)\big), , \ldots, \big(\vcx^i(t_T), \vcu^i(t_T)\big)\Big)
\end{equation*}
sampled at times $\{t_1, \ldots, t_T\}$ generated by the \emph{true dynamics}~\eqref{eq: joint-dynamics-perturbed}. We aim to learn a flatness-preserving residual dynamics $\hat\Delta_\Theta(\vcx, \vcu)$, defined by learnable parameters $\Theta\in\R^p$, that best augment the nominal model to match the trajectories $\vctau$ generated by the true system.
% As formalized before, we assume that we have a regular, flat pure-feedback nominal dynamics (Assumption~\ref{assm: nominal-pure-feedback}~and~\ref{assm: nominal-dynamics-regular}) and that the true residual dynamics is lower-triangular (Assumption~\ref{assm: residual-strict-feedback}). It is thus natural to parameterize $\hat\Delta_\Theta$ to mimic the structure of the true residual as

So as to leverage Theorems~\ref{thm: regularity-perturbed} and \ref{thm: form-of-pert-flat-map}, we restrict the learnable residual dynamics model $\hat{\Delta}_\Theta$ to be lower-triangular:
\begin{equation*}
    \hat{\Delta}_\Theta(\vcx) = \begin{bmatrix}
        \hat\Delta_1(\vcx_1; \theta_1)\\
        \hat\Delta_2(\vcx_1, \vcx_2; \theta_2) \\
        \vdots\\
        \hat\Delta_r(\vcx_1, \ldots, \vcx_r; \theta_r)
    \end{bmatrix},
\end{equation*}
where $\Theta := (\theta_1, \ldots, \theta_r)$ are parameters of suitable dimensions
and the $\hat\Delta_i$ are continuously differentiable in each $\vcx_j$. By Theorem~\ref{thm: regularity-perturbed}, the augmented system $\bar{f}+\hat{\Delta}_\Theta$ is differentially flat, and its flatness diffeomorphisms can be constructed according to equation~\eqref{eq: pert-flat-map-construction}.
{A practically appealing and standard choice is to parameterize the $\hat\Delta_i$ as neural networks with continuously differentiable activation functions.}

To find the $\hat\Delta_\Theta$ that best approximates the data, we consider the loss function that sums the squared error between the true and learned state derivatives along each trajectory:
\begin{equation}\label{eq: loss-fn}
    L(\Theta) = \sum_{i=1}^{N}\sum_{k=1}^{T-1}
    \norm{\Delta(\vctau^i_k) -
    \hat\Delta(\vctau^i_k; \Theta) }_2^2,
\end{equation}
where the true residual $\Delta$ is approximated via a one-step finite difference as
\begin{equation*}
    \Delta(\vctau^i_k) := \frac{\vcx^i(t_{k+1}) - \vcx^i(t_k) }{t_{k+1} - t_k} - \bar{f}(\vcx^i(t_k), \vcu^i(t_k)).
\end{equation*}
The weights $\Theta$ can then be optimized via backpropagation using readily available deep-learning libraries. After the disturbance is learned, the flatness diffeomorphism can be constructed offline using the formula described in \eqref{eq: pert-flat-map-construction}, where we again leverage the deep learning libraries to compute the relevant Jacobians via automatic differentiation.

% To make the learning problem proper, we make the following assumption about the true residual dynamics.
% \begin{assumption}
%     The true residual dynamics \eqref{eq: residual-defn} is lower-triangular.
% \end{assumption}

% remains flat, and the constraint \eqref{eq: informal-opt-flat-constr} is thus readily satisfied. We then instantiate \eqref{eq: informal-opt} as finding the parameter $\Theta$ that minimizes the loss function

% \begin{algorithm}
%     blah
% \end{algorithm}

% \begin{assumption}[Formal Ver. of Assm~\ref{assm: true-dynamics-flat}] \label{assm: true-dynamics-flat-formal}
%     We assume that both the true and the nominal dynamics are flat pure-feedback systems of the form \eqref{eq: flat-PFF}. Further, we assume that the true residual dynamics \eqref{eq: residual-defn} can be expressed as the form in Theorem~\ref{thm: regularity-perturbed}.
% \end{assumption}

% By Theorem~\ref{thm: regularity-perturbed}, a nominally flat pure-feedback system remains flat if we consider parameterizing the learned residual $\hat\Delta_\Theta(\vcx)$ as
% \begin{equation*}
%     \hat{\Delta}_\Theta(\vcx) = \begin{bmatrix}
%         \hat\Delta_1(\vcx_1; \theta_1)\\
%         \hat\Delta_2(\vcx_1, \vcx_2; \theta_2) \\
%         \vdots\\
%         \hat\Delta_r(\vcx_1, \ldots, \vcx_r; \theta_r)
%     \end{bmatrix},
% \end{equation*}
% where $\Theta = (\theta_1, \ldots, \theta_r)$ are parameters of suitable dimensions. 

\section{Numerical Experiments}\label{sec: experiment}
We validate our algorithm via simulation experiments on the 2D quadrotor \eqref{eq: 2dquad} under unmodeled air resistance. First, we test whether augmenting the nominal model with learned residual dynamics can generate more dynamically consistent trajectories in an open-loop rollout (Section~\ref{sec: exp-open-loop}). Then, we design a flatness-based tracking controller using the learned model and compare its tracking performance and computation time against an NMPC baseline (Section~\ref{sec: exp-closed-loop}). We show that the flatness-based controller achieves tracking performance similar to that of the NMPC, while requiring an order of magnitude less time to compute the control action. This computational efficiency will allow it to be run at a higher frequency and makes it more suitable for computationally constrained hardware. Code to reproduce our experiments is available at \url{https://github.com/FJYang96/flat-residual/}.

\subsection{Setup}
We consider the 2D quadrotor dynamics \eqref{eq: 2dquad} under the lower-triangular perturbation
\begin{equation}\label{eq: 2d-quad-disturbance}
\begin{aligned}
    \dot\vcx_2 &= \bar{f}_2(\vcx_3) + \Delta_2(\vcx_2) \\
    &= \bar{f}_2(\vcx_3) - \underbrace{C_r\vcx_2}_{\Delta^r(\vcx_2)}
    - \underbrace{C_p\norm{\vcx_2}\vcx_2}_{\Delta^p(\vcx_2)},
\end{aligned}
    % = \dot\vcv =  \begin{bmatrix} -F\sin\theta \\ F\cos\theta - g\end{bmatrix}
\end{equation}
where $\Delta^p$ and $\Delta^r$ seek to capture the respective effects of a (fuselage) parasitic drag and an isotropic linear drag inspired by rotor drag. These terms are simplified from their 3D, non-isotropic counterparts considered in previous work~\citep{kai2017nonlinear, faessler2017differential, chee2022knode}. The values of mass, moment of inertia, and drag coefficients are detailed in Appendix~\ref{sec: exp-detail}.

We collect $3000$ trajectories by sampling the initial condition in the square region $\vcp \in [-1, 1]^2$ and applying random control inputs. Details on the sampling procedure for both the initial condition and the control inputs can be found in Appendix~\ref{sec: exp-detail}. Each trajectory is $0.5$ seconds and sampled at $100Hz$. %Unlike previous works \cite{chee2022knode, faessler2017differential} that collect data around the reference trajectory to train reference-specific residual models, we train a single model for use in two (or more) downstream tasks.
We parameterize $\hat\Delta_\Theta(x)$ as a feedforward neural network (1 hidden layer of 32 neurons) with GeLU activations to ensure smoothness. Training for 20 epochs with ADAM on an Nvidia A5000 GPU takes under 1 minute.
% We parameterize the residual dynamics component $\hat\Delta_\Theta$ as a feedforward neural network with 1 hidden layer of 32 neurons and GeLU activation functions so as to ensure the smoothness of the model. We train the network over $20$ epochs with the ADAM optimizer. All training takes place on a single Nvidia A5000 GPU, and training one residual model takes under 1 minute.
To reduce the effect of randomness in the trajectory data samples and network initialization, we run the following experiments for $30$ different seeds.
% randomness in the results, we run the following experiment for $30$ different seeds. The  are all distinct for each seed.

In what follows, we couple the learned model with common flatness-based techniques for planning and control. We evaluate the performance on a circular trajectory of the form
\begin{equation*}
    \vcp_{circ}(t) = \begin{bmatrix}
        \cos (\omega t) \\ \sin (\omega t)
    \end{bmatrix}, \quad t = [0, T]
\end{equation*}
and a Genoro's lemniscate trajectory of the form
\begin{equation*}
    \vcp_{lem}(t) = \begin{bmatrix}
        \frac{a\cos(\omega t)}{1 + \sin^2(\omega t)} \\ \frac{b\sin(\omega t)\cos(\omega t)}{1 + \sin^2(\omega t)}
    \end{bmatrix}, \quad t = [0, T],
\end{equation*}
with $a=1, b=0.6$. We take $T = 14$ and set $\omega = 2\pi/T$.

\subsection{Open-Loop Trajectory Evaluation}\label{sec: exp-open-loop}
We first test the augmented model's ability to generate dynamically consistent \textit{state and input} trajectories from trajectories of the \textit{flat outputs}. For each reference trajectory ($\vcp_{circ}$ or $\vcp_{lem}$), we generate open-loop control inputs in two ways: (1) using the flatness diffeomorphism, $(\bar\Phi, \bar\Psi)$, of the nominal 2D quadrotor dynamics model $\bar f$ (which ignores drag), and (2) using the augmented diffeomorphism, $(\hat\Phi, \hat\Psi)$, of the augmented dynamics model $\bar f + \hat\Delta_\Theta$ that accounts for drag. We then roll out the open loop control actions under the true dynamics $f$. Recall that for our system \eqref{eq: 2dquad}, the flat outputs are the position (\textit{i.e.}, ${\vcy = \vcx_1 = \vcp}$). For the nominal diffeomorphism, we compute the trajectory $\bar{\vcx}(t)$ that satisfies
\begin{equation*}
    \frac{d}{dt}\bar{\vcx}(t) = f\left(
    \bar{\vcx}(t), \bar{\Psi}\left(\vcp(t), \ldots, \vcp^{(4)}(t)\right)
    \right).
\end{equation*}
Similarly, we compute the trajectory $\hat{\vcx}(t)$ that satisfies
\begin{equation*}
    \frac{d}{dt}\hat{\vcx}(t) = f\left(
    \hat{\vcx}(t), \hat{\Psi}\left(\vcp(t), \ldots, \vcp^{(4)}(t)\right)
    \right)
\end{equation*}
for the augmented system.
%
% Given the learned residual model $\hat\vcDelta_\Theta$, we construct the flatness diffeomorphisms of the learned system $\hat\Phi, \hat\Psi$ per \eqref{eq: pert-flat-map-construction}. For a given reference trajectory of the position $\vcp(t)$, we roll out the open loop control actions computed from $\hat\Psi$ under the true dynamics $f$, i.e., we compute the trajectory $\hat{\vcx}(t)$ satisfying
% \begin{equation*}
%     \frac{d}{dt}\hat{\vcx}(t) = f\left(
%     \hat{\vcx}(t), \hat{\Psi}\left(\vcp(t), \ldots, \vcp^{(\beta)}(t)\right)
%     \right).
% \end{equation*}
% We recall that the positions are the flat outputs, i.e., that ${\vcy = \vcx_1 = \vcp}$. For comparison, we also compute open-loop control actions with the flatness diffeomorphisms of the nominal 2D quadrotor model. Denoting the flatness diffeomorphism of the nominal dynamics by $\bar\Phi$ and $\bar\Psi$, the trajectory $\bar{\vcx}(t)$ induced by the nominal system satisfies
% \begin{equation*}
%     \frac{d}{dt}\bar{\vcx}(t) = f\left(
%     \bar{\vcx}(t), \bar{\Psi}\left(\vcp(t), \ldots, \vcp^{(\beta)}(t)\right)
%     \right).
% \end{equation*}
% We roll out the open-loop control actions for both the circular and lemniscate trajectories
%
In both cases, we integrate these ODE's numerically using the Runge-Kutta method with a step size $\tau=10^{-2}$ seconds. We report the position error, computed for a trajectory $\vcx(t)$ as
\begin{equation*}
    \vce = \frac{1}{\lceil T/\tau \rceil}\sum_{k=1}^{\lceil T/\tau \rceil} \norm{\vcx_1(k\tau) - \vcp(k\tau)}
\end{equation*}
in Table~\ref{tab:open-loop-result}, and show representative trajectories in Figure~\ref{fig: open-loop-plot}. We note that, on average, incorporating the learned residual model reduces position error by nearly an order of magnitude. This suggests that the learned flat model is able to capture the unmodeled effects and adjust the flatness diffeomorphisms accordingly, showing its potential to be used for accurate trajectory planning. We also repeated the same steps for a flatness diffeomorphism $(\Phi^*, \Psi^*)$ constructed with the ground-truth residual. Consistent with our theoretical results, the corresponding simulated open-loop trajectory has a position error less than $10^{-5}$, providing further evidence that the procedure in \eqref{eq: pert-flat-map-construction} correctly constructs the diffeomorphism and thus corroborating our claims in Theorem~\ref{thm: form-of-pert-flat-map}.

\ifniceformat
\begin{figure}[ht]
    \centering
    \begin{subfigure}{0.22\textwidth}
        \centering
        \includegraphics[width=\linewidth, trim=10pt 10pt 10pt 10pt, clip]{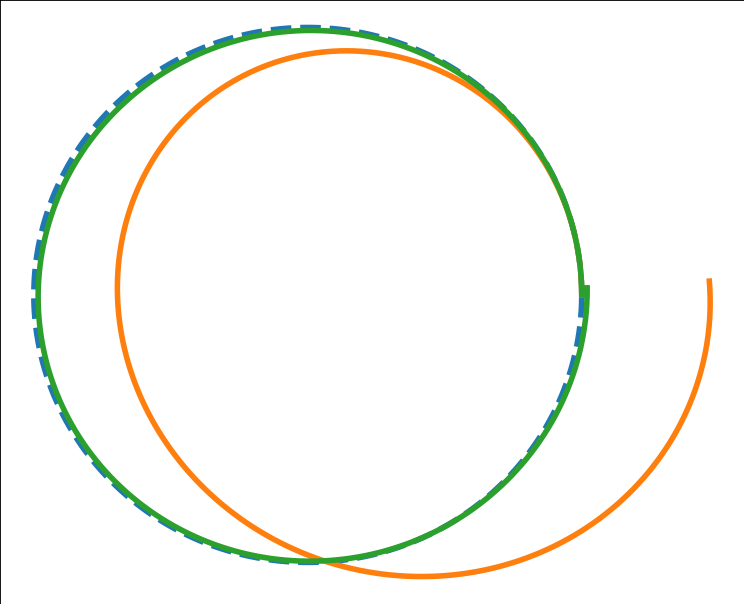}
        \caption{Circular Trajectory}
    \end{subfigure}
    \begin{subfigure}{0.25\textwidth}
        \centering
        \includegraphics[width=\linewidth, trim=5 70 10 5, clip]{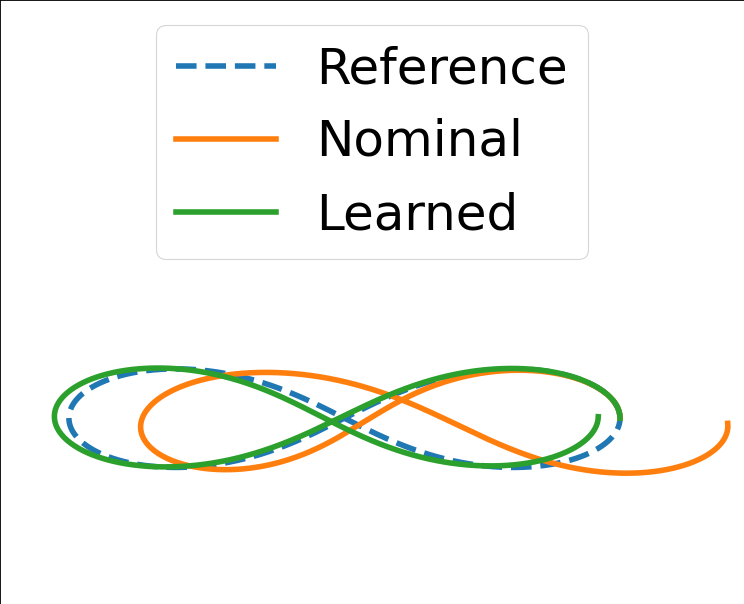}
        \caption{Lemniscate Trajectory}
    \end{subfigure}
    \caption{Representative Open-Loop Trajectories: Adjusting the flatness diffeomorphism to account for the learned residual dynamics results in more accurate trajectory planning, leading to open-loop behavior that better matches the intended reference trajectory.}
    \label{fig: open-loop-plot}
\end{figure}

\else

\begin{figure}[ht]
    \centering
    \begin{subfigure}{0.22\textwidth}
        \centering
        \includegraphics[width=\linewidth, trim=10pt 10pt 10pt 10pt, clip]{img/ol-ellipse.png}
        \caption{Circular Trajectory}
    \end{subfigure}
    \hfill
    \begin{subfigure}{0.25\textwidth}
        \centering
        \includegraphics[width=\linewidth, trim=5 70 10 5, clip]{img/ol-lem.png}
        \caption{Lemniscate Trajectory}
    \end{subfigure}
    \caption{Representative Open-Loop Trajectories: Adjusting the flatness diffeomorphism to account for the learned residual dynamics results in more accurate trajectory planning, leading to open-loop behavior that better matches the intended reference trajectory.}
    \label{fig: open-loop-plot}
\end{figure}
\fi

\begin{table}[H]
    \centering
    \begin{tabular}{|c|c|c|c|c|}
    \hline            &   \textbf{Nominal}  &  \textbf{Learned} \\ \hline
    \textbf{Circle}     &  0.312  &  0.0265 $\pm$ 0.0058 \\ \hline
    \textbf{Lemniscate}  &  0.238  &  0.0417 $\pm$ 0.0066 \\ \hline
    \end{tabular}
    \caption{Mean Open-Loop Position Error (m) $\pm$ Standard Deviation over $30$ training runs.}
    \label{tab:open-loop-result}
\end{table}

\subsection{Closed-Loop Tracking Control}\label{sec: exp-closed-loop}
We now use the learned model to construct a tracking controller for the 2D quadrotor that leverages the flatness of the learned augmented dynamics.  As is standard, we design a linear feedback control law for the linear flat dynamics (sometimes called the ``Brunovsky canonical form''~\citep{fliess1995flatness})
\begin{equation*}
    % \vcy^{(\beta)} = \vcnu,
    \vcy^{(4)} = \vcnu,
\end{equation*}
where $\vcnu$ should be interpreted as a virtual control input.  For a reference trajectory $\vcy_r$, we choose the virtual control input $\vcnu$ according to the control law
\begin{equation}\label{eq: flat-controller-gain}
    % \vcnu = \vcy_r^{(\beta)} -k_1(\vcy - \vcy_r)- \ldots -
    %         k_{\beta-1}(\vcy^{(\beta-1)} - \vcy_r^{(\beta-1)}).
    \vcnu = \vcy_r^{(4)} -k_0(\vcy - \vcy_r)- \ldots -
            k_{3}(\vcy^{(3)} - \vcy_r^{(3)}).
\end{equation}
We assume that the controller can observe the state $\vcx$ and use a Luenberger observer to estimate the higher-order derivatives of the flat output. We then compute the input as
\begin{equation*}
    \vcu = \hat\Psi\left( \vcy, \dot\vcy, \ddot \vcy, \vcy^{(3)}, \vcnu \right).
\end{equation*}
% We performed minimal tuning for the controller gains $k_i > 0$.

We compare this flatness-based approach against a nonlinear model predictive controller (NMPC) that leverages the same learned model. Given a measured state $\xi$, the NMPC controller solves the nonlinear optimal control problem
\begin{equation*}
\begin{aligned}
    \minimize{\vcx, \vcu} &\quad \int_{t=0}^{H}\Big((\vcy-\vcy_r)^\top Q (\vcy-\vcy_r) + \vcu^\top R \vcu \Big)dt \\
    &\qquad\quad + (\vcy(H) - \vcy_r(H))^\top Q_T (\vcy(H)-\vcy_r(H)) \\
    \text{subject to} &\quad \dot\vcx = \bar{f}(\vcx, \vcu) + \hat\vcDelta_\Theta(\vcx)\\
    &\quad \vcy = \vcx_1, \qquad \vcx(0) = \xi.
\end{aligned}
\end{equation*}
The controller then applies the control action $\vcu(0)$ until the next sampling step. In practice, we solve this problem using the direct method. We consider a prediction horizon of ${H=1}$ second and discretize the dynamics into $100$ time steps using RK4 integration. We use the cost matrices $Q = Q_T = I$ and $R = 0.001I$. We warmstart the optimization problem from the previous solution and solve it in Python using Casadi~\citep{Andersson2019} until convergence. We also assume that the controller observes the full state at each time step.

\begin{figure*}[ht]
    \centering
    \begin{subfigure}{0.25\textwidth}
        \centering
        \includegraphics[width=\linewidth, trim=10 10 10 10, clip]{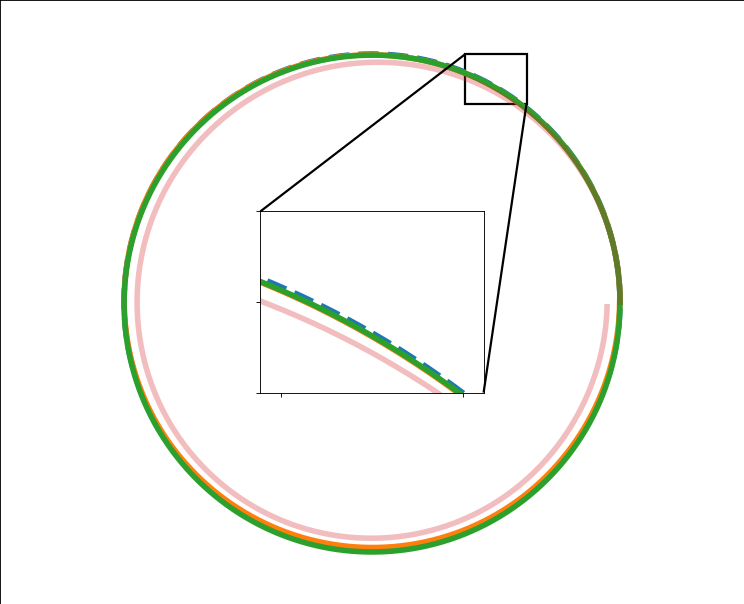}
        \caption{Circular Trajectory} \label{fig: cl-circ}
    \end{subfigure}
    \begin{subfigure}{0.3\textwidth}
        \centering
        \includegraphics[width=\linewidth, trim=5 70 10 5, clip]{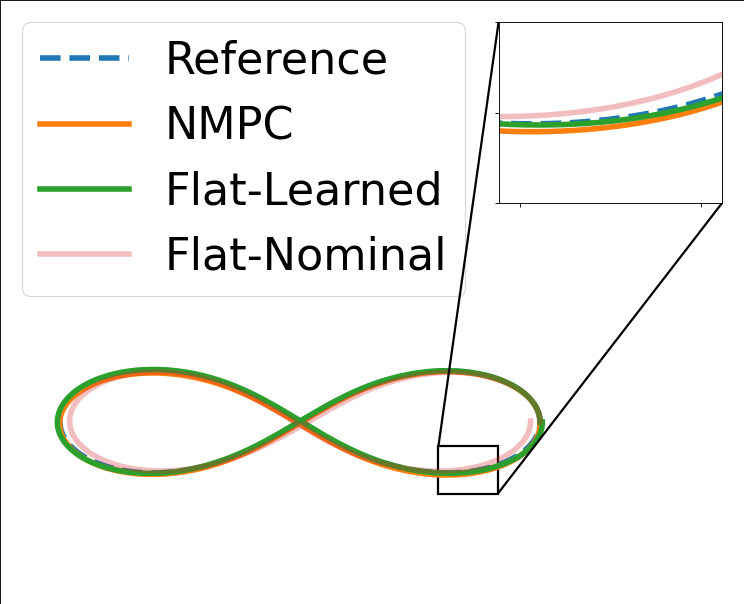}
        \caption{Lemniscate Trajectory} \label{fig: cl-lem}
    \end{subfigure}
    \hfill
    \begin{subfigure}{0.37\textwidth}
        \centering
        \includegraphics[width=\linewidth]{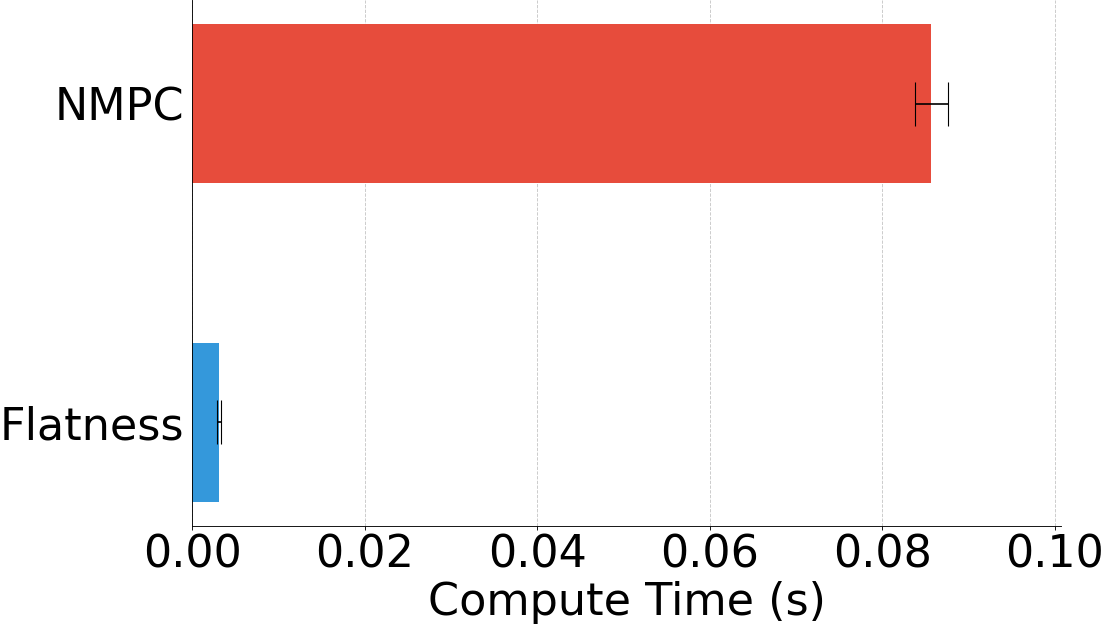}
        \caption{Average Compute Time Comparison} \label{fig: compute-time}
    \end{subfigure}
    \caption{Closed-Loop Tracking Control Results: Figs.~\ref{fig: cl-circ} and \ref{fig: cl-lem} show representative trajectories tracking the circular and lemniscate references, and Fig.~\ref{fig: compute-time} shows mean and standard deviation of compute time per control update for both methods. The flatness-based controller with learned residual dynamics achieves similar performance to nonlinear MPC while requiring $20\times$ less time to compute control actions.}
    \label{fig: closed-loop-plot}
\end{figure*}

\begin{table}[ht]
    \centering
    \begin{tabular}{|c|c|c|c|}
    \hline 
    & \makecell{\textbf{Flatness} \\ \textbf{(Nominal)}} &   
    \makecell{\textbf{Flatness} \\ \textbf{(Learned)}} & \textbf{NMPC} \\ \hline
    \textbf{Circle} & 0.0478  &  0.0051 $\pm$ 0.0009   &   0.0062 $\pm$ 0.0001 \\ \hline
    \textbf{Lemniscate} & 0.0258 &  0.0061 $\pm$ 0.0008   &   0.0063 $\pm$ 0.0001 \\ \hline
    \end{tabular}
    \caption{Mean Closed-loop Tracking Error (m) $\pm$ Standard Deviation over $30$ training runs.}
    \label{tab:closed-loop-result}
\end{table}

% We add zero-mean Gaussian noise with a standard deviation of $10^{-3}$ to the measurement and control inputs for closed-loop simulation.
For both methods, the control actions sampled with zero-order hold are updated at a rate of $100Hz$. Like the open-loop case, we report the position error in Table~\ref{tab:closed-loop-result} and show representative trajectories in Figure~\ref{fig: closed-loop-plot}. We observe that \rewrite{our learned flatness-based controller is able to perform comparably to NMPC, with both showing more than $5\times$ less position tracking error than the nominal flatness-based controller}{by incorporating the learned residual model, both the flatness-based controller and the NMPC showed more than $5\times$ less position tracking error than the nominal flatness-based controller}. However, despite the need to compute Jacobians of the neural networks, the simple structure of the flatness-based controller allows it to be much more computationally efficient. We show the results on computation time in Figure~\ref{fig: compute-time}. Specifically, the flatness-based controller averages $3.3 \pm 0.3ms$ of wall-clock time per control action update. In contrast, even with warm-starting, NMPC takes $85.2 \pm 1.74ms$. This suggests the potential to run the learned flatness-based controller at a significantly higher frequency or on much more computationally constrained hardware. Although our implementation for both methods has room for optimization, the observed computational savings are in line with those observed in prior work~\citep{sun2022comparative}, wherein the authors show that flatness-based controllers are able to run 2 orders of magnitude faster than NMPC in real-world hardware experiments, with similar tracking performance.

% Despite the need to compute Jacobians of the neural network models,  Although the implementation for both methods is fairly unoptimized, the 
% \nik{this is the punchline (same performance, way faster).  needs to be made more precise.  Can we do an experiment where we show how compute scales as a function of dimension for both approaches on a synthetic example?}
% \fengjun{comment on closed-loop experiment with nominal dynamics.}

\section{Conclusion}\label{sec:conclusion}
We showed that residual learning can be made compatible with differential flatness-based control methods.  In particular, we showed that augmenting a nominal differentially flat system in pure-feedback form with a lower-triangular residual dynamics term preserves differential flatness of the overall system.  We further showed how to construct the flat maps for the augmented system using those of the nominal system.  Leveraging these results, we proposed an algorithm for learning flatness-preserving residual dynamics for pure-feedback form systems. We operationalized these insights using standard deep learning libraries, and demonstrated the promise of the approach through simulation experiments on 2D quadrotors. This provides a first step towards more principled methods for learning residual models for differentially flat systems. Future work includes relaxing the structural assumptions, extending the algorithm to allow online adaptation to slowly varying residual dynamics, and deploying the algorithm on quadrotors in physical experiments.

%%%%%%%%%%%%%%%%%%%%%%%%%%%%%%% References %%%%%%%%%%%%%%%%%%%%%%%%%%%%%%%%%%%%%%

\ifniceformat
\bibliographystyle{unsrtnat}
\else
\bibliographystyle{bibFiles/IEEEbib}
\fi
\bibliography{bibFiles/sample}

%%%%%%%%%%%%%%%%%%%%%%%%%%%%%%% Appendix %%%%%%%%%%%%%%%%%%%%%%%%%%%%%%%%%%%%%%
% \newpage
\appendix
\ifniceformat
\section{Proof of Theorem~\ref{thm: regularity-unperturbed}}%
\else
\subsection{Proof of Theorem~\ref{thm: regularity-unperturbed}}%
\fi
\label{sec: thm1-proof}
We first prove a helpful lemma before formalizing the proof of Theorem~\ref{thm: regularity-unperturbed}. In what follows, we abuse notation and set $\vcx_{r+1}:= \vcu$ to simplify the presentation.

\begin{lemma}\label{lem: implicit-function-lemma}
Under the assumptions of Theorem~\ref{thm: regularity-unperturbed} and 
for each $k = 1, \ldots, r$, there exists an open set ${\mathcal{N}_k \ni (\vcx^*,\vcu^*)}$ and a unique, continuously differentiable function $h_k$ such that for all $(\vcx, \vcu) \in \mathcal{N}_k$,
\begin{equation}\label{eq: implicit-function-lemma}
    \begin{aligned}
        &h_k\left(\vcx_1, \ldots, \vcx_{k}, \bar{f}_k(\vcx_1, \ldots, \vcx_{k+1})\right) = \vcx_{k+1}.
    \end{aligned}
\end{equation}
\end{lemma}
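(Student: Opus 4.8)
The plan is to obtain each $h_k$ as a single, direct application of the implicit function theorem, with the required invertibility supplied verbatim by Assumption~\ref{assm: nominal-dynamics-regular}. Throughout, I adopt the convention $\vcx_{r+1} := \vcu$ already introduced above, so that the two displayed Jacobian conditions of Assumption~\ref{assm: nominal-dynamics-regular} collapse into the single statement that $D_{\vcx_{k+1}}\bar{f}_k$ is nonsingular at the operating point for every $k = 1, \ldots, r$.

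First I would fix $k$ and introduce an auxiliary target variable $\boldsymbol{\eta} \in \R^m$, defining the $C^1$ map
\[
    G_k(\vcx_1, \ldots, \vcx_{k+1}, \boldsymbol{\eta}) := \bar{f}_k(\vcx_1, \ldots, \vcx_{k+1}) - \boldsymbol{\eta}.
\]
This map vanishes at $(\vcx_1^*, \ldots, \vcx_{k+1}^*, \boldsymbol{\eta}^*)$, where $\boldsymbol{\eta}^* := \bar{f}_k(\vcx_1^*, \ldots, \vcx_{k+1}^*)$. Its partial Jacobian in the $\vcx_{k+1}$ block is exactly $D_{\vcx_{k+1}}\bar{f}_k(\vcx_1^*, \ldots, \vcx_{k+1}^*)$, which is a square $m \times m$ matrix (since $\bar{f}_k : \R^{(k+1)m} \to \R^m$ and $\vcx_{k+1} \in \R^m$) and is nonsingular by Assumption~\ref{assm: nominal-dynamics-regular}.

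Next I would invoke the implicit function theorem on $G_k = 0$, solving for the $\vcx_{k+1}$ block in terms of the remaining arguments. This produces a unique $C^1$ map $h_k$, defined on a neighborhood of $(\vcx_1^*, \ldots, \vcx_k^*, \boldsymbol{\eta}^*)$, for which the relation $\bar{f}_k(\vcx_1, \ldots, \vcx_{k+1}) = \boldsymbol{\eta}$ is locally equivalent to $\vcx_{k+1} = h_k(\vcx_1, \ldots, \vcx_k, \boldsymbol{\eta})$. Substituting $\boldsymbol{\eta} = \bar{f}_k(\vcx_1, \ldots, \vcx_{k+1})$ into this equivalence recovers precisely~\eqref{eq: implicit-function-lemma}, while the continuous differentiability and local uniqueness of $h_k$ are delivered by the same theorem. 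The case $k = r$ is handled identically under the convention $\vcx_{r+1} := \vcu$, now using the second Jacobian condition of Assumption~\ref{assm: nominal-dynamics-regular}.

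The argument is essentially bookkeeping around one IFT invocation, so I do not expect a genuine obstacle. The only point requiring care is domain management: the implicit function theorem furnishes neighborhoods in the argument space $(\vcx_1, \ldots, \vcx_k, \boldsymbol{\eta})$ and in the solution space for $\vcx_{k+1}$, so I would use continuity of $\bar{f}_k$ to shrink these to a single open set $\mathcal{N}_k \ni (\vcx^*, \vcu^*)$ in the original coordinates on which both the independent variables and the resulting $\vcx_{k+1}$ remain in range, and then note that the asserted uniqueness of $h_k$ is only the local uniqueness (within $\mathcal{N}_k$) that the lemma claims.
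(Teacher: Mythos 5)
Your proposal is correct and follows essentially the same route as the paper's proof: both introduce the auxiliary map $\bar{f}_k(\vcx_1,\ldots,\vcx_{k+1}) - \vcc$ (your $G_k$ is the paper's $F_k$), verify it vanishes at the operating point, apply the implicit function theorem using the nonsingularity of $D_{\vcx_{k+1}}\bar{f}_k$ from Assumption~\ref{assm: nominal-dynamics-regular}, and pull the resulting neighborhood back to obtain $\mathcal{N}_k$ in the original coordinates. The unified treatment of $k=r$ via the convention $\vcx_{r+1} := \vcu$ also matches the paper.
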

\begin{proof}
\label{sec: lem-proof}
Consider the function ${F_k: \R^{(k+2)m} \to \R^m}$ given by
\begin{equation*}
    F_k(\vcx_1, \ldots, \vcx_{k}, \vcx_{k+1}, \vcc) := \bar{f}_k(\vcx_1, \ldots, \vcx_{k}, \vcx_{k+1}) - \vcc.
\end{equation*}
Clearly, when $\vcc^* = \dot\vcx^*_k := \bar{f}_k(\vcx^*_1, \ldots, \vcx^*_{k+1})$,
\begin{equation*}
    F_k(\vcx^*_1, \ldots, \vcx^*_{k}, \vcx^*_{k+1}, \vcc^*) = 0.
\end{equation*}
By the regularity of $\bar{f}_k$,
%%%%%%%%%%%%%%%%%%%%%%%%%%%%%%%
\ifniceformat
\begin{equation*}
    \left\lvert D_{\vcx_{k+1}} F_k(\vcx^*_1, \ldots, \vcx^*_{k+1}, \vcc) \right\rvert
    = \left\lvert D_{\vcx_{k+1}} \bar{f}_k(\vcx^*_1, \ldots, \vcx^*_{k+1}) \right\rvert \neq 0
\end{equation*}
\else
\begin{multline*}
    % \left\lvert \frac{\partial \bar{f}_k}{\partial \vcx_{k+1}}(\vcx^*_1, \ldots, \vcx^*_{k+1}, \vcc) \right\rvert 
    % = \left\lvert \frac{\partial \bar{f}_k}{\partial \vcx_{k+1}}(\vcx^*_1, \ldots, \vcx^*_{k+1}) \right\rvert \neq 0
    \left\lvert D_{\vcx_{k+1}} F_k(\vcx^*_1, \ldots, \vcx^*_{k+1}, \vcc) \right\rvert  \\
    = \left\lvert D_{\vcx_{k+1}} \bar{f}_k(\vcx^*_1, \ldots, \vcx^*_{k+1}) \right\rvert \neq 0
\end{multline*}
\fi
%%%%%%%%%%%%%%%%%%%%%%%%%%%%%%%
for all $\vcc\in\R^m$. Thus, by the implicit function theorem, there exists an open neighborhood $V_k \ni (\vcx^*_{1},\ldots, \vcx^*_k, \vcc^*)$ and a unique, continuously differentiable function ${h_k: V_k \to \R^m}$ on which the desired equality \eqref{eq: implicit-function-lemma} holds. Defining
\begin{equation*}
    \mathcal{N}_k := \{\vcx \in \R^{m(r+1)}| \big(\vcx_1, \ldots, \vcx_k, \bar{f}_k(\vcx_1, \ldots, \vcx_{k+1}) \big) \in V_k\},
\end{equation*}
we obtain the desired result.
\end{proof}
With this lemma in hand, we now formalize the proof.

\begin{proof}[Proof of Theorem~\ref{thm: regularity-unperturbed}]

We prove the claim for a neighborhood of $(\vcx^*, \vcu^*)$ of the form
$\mathcal{N} := \bigcap_{i=1}^{r}\mathcal{N}_k$, on which \eqref{eq: implicit-function-lemma} thus holds for all $k = 1, \ldots, r$. 
Lemma~\ref{lem: implicit-function-lemma} thus implies that in this neighborhood, the values of $\vcx_1, \ldots, \vcx_k$ and $\dot\vcx_{k}$ uniquely determine $\vcx_{k+1}$ via the dynamic feasibility constraints. To show that the system \eqref{eq: flat-PFF} is differentially flat, we leverage this fact and explicitly construct the diffeomorphism ${(\Phi, \Psi)}$. 
% We restrict ourselves to the set $X = \bigcap_{i=1}^{r}X_k$, where \eqref{eq: implicit-function-lemma} holds for all $k = 1, \ldots, r$.

In particular, we prove the claim by induction on $k$. For the base case $k=1$, note that $\vcx_1 = \Phi_1(\vcy) = \vcy$ for all $(\vcx, \vcu) \in \mathcal{N}$. For the purposes of induction, assume that for all integers $k' < k$, there exists a function $\Phi_{k'}$ such that
\begin{equation*}
    \vcx_{k'} = \Phi_{k'}(\vcy, \dot\vcy, \ldots, \vcy^{(k'-1)}),
\end{equation*}
for $(\vcx, \vcu) \in \mathcal{N}$. Then, by Lemma~\ref{lem: implicit-function-lemma}, we can also express $\vcx_k$ in terms of the flat outputs as
%%%%%%%%%%%%%%%%%%%%%%%%%%%%%%%
\ifniceformat
\begin{align*}
    \vcx_{k} &= h_{k-1}(\vcx_1, \ldots, \vcx_{k-1}, \dot\vcx_{k-1}) \\
    &= h_{k-1}\left(\Phi_1(\vcy), \ldots, \Phi_{k-1}(\vcy, \ldots, \vcy^{(k-2)}), \frac{d}{dt}\Phi_{k-1}(\vcy, \ldots, \vcy^{(k-2)})\right)\\
    &=:\Phi_k(\vcy, \ldots, \vcy^{(k-1)}),
\end{align*}
\else
\begin{align*}
    \vcx_{k} &= h_{k-1}(\vcx_1, \ldots, \vcx_{k-1}, \dot\vcx_{k-1}) \\
    &= h_{k-1}(\Phi_1(\vcy), \ldots, \Phi_{k-1}(\vcy, \ldots, \vcy^{(k-2)}),\\
    &\qquad\qquad \frac{d}{dt}\Phi_{k-1}(\vcy, \ldots, \vcy^{(k-2)}))\\
    &=:\Phi_k(\vcy, \ldots, \vcy^{(k-1)}),
\end{align*}
\fi
%%%%%%%%%%%%%%%%%%%%%%%%%%%%%%%
where ${\dot\Phi_{k-1}(\vcy, \ldots, \vcy^{(k-2)})}$ can be computed via
\begin{equation*}
    % \frac{d}{dt}\Phi_{k-1}(\vcy, \ldots, \vcy^{(k-2)})) = \nabla\Phi_{k-1}^\top \begin{bmatrix}
    \frac{d}{dt}\Phi_{k-1}(\vcy, \ldots, \vcy^{(k-2)})) = D\Phi_{k-1} \begin{bmatrix}
        \dot\vcy \\ \vdots \\  \vcy^{(k-1)}
    \end{bmatrix}.
\end{equation*}
% where $D\Phi_{k-1}$ denotes the Jacobian matrix of $\Phi_{k-1}$ with respect to all arguments evaluated at ${[\vcy,\ldots,\vcy^{(k-2)}]}$.
The overall flatness diffeomorphism is thus given by
\begin{equation*}
    \Phi = [\Phi_1^\top, \ldots, \Phi_r^\top]^\top, \qquad \Psi = \Phi_{r+1}. \qedhere
\end{equation*}
\end{proof}

% \subsection{Flat maps of the 2D Quadrotor System \label{sec: 2dquad-nominal-diffeo}}
% We restrict our attention to the case where $F > 0$, for which we explicitly know the maps $\Phi$ and $\Psi$ as
% \begin{equation*}
%     \begin{aligned}
%         \vcx_1 &= \Phi_1(\vcy) = \vcy, \\
%         \vcx_2 &= \Phi_2(\vcy, \dot \vcy) = \dot \vcy, \\
%         \vcx_3 &= \Phi_3(\vcy, \dot\vcy, \ddot\vcy) =
%         \begin{bmatrix}
%             m\sqrt{\ddot\vcy_1^2 + (\ddot\vcy_2 + g)^2} \\
%             \arctan(-\ddot \vcy_1 / (\ddot\vcy_2 + g))
%         \end{bmatrix}\\
%         \vcx_4 &= \Phi_4(\bar\vcy) = \frac{d}{dt}\Phi_3 = \frac{\partial \Phi_3}{\partial y}[\dot\vcy, \ddot\vcy, \dddot \vcy]^\top\\
%         u &= \Psi(\bar\vcy) = \frac{d}{dt}\Phi_4.
%     \end{aligned}
% \end{equation*}

\ifniceformat
\section{Experiment Details}\label{sec: exp-detail}
\else
\subsection{Experiment Details}\label{sec: exp-detail}
\fi
The parameters for the quadrotor and the disturbances are given in Table~\ref{tab: quad-params}.
\begin{table}[ht]
    \centering
    \begin{tabular}{|c|c|}
        \hline
        % Param. Name & Notation & Value \\ \hline
        % Mass & $m$ & 1 \\ \hline
        % Moment of Inertia & $I$ & 0.1\\ \hline
        % Gravitational Acc. & $g$ & 9.81 \\ \hline
        \textbf{Parameter} & \textbf{Value} \\ \hline
        $m$ & 1 \\ \hline
        $I$ & 0.1\\ \hline
        $g$ & 9.81 \\ \hline
        $C_p$ & 0.1 \\ \hline
        $C_r$ & 0.01 \\ \hline
    \end{tabular}
    \caption{Quadrotor and Disturbance Parameters for Numerical Experiments}
    \label{tab: quad-params}
\end{table}

For collecting the training data, we sample the initial condition $\vcx(0)$ uniformly from the set
$$\left\{\vcx \in \R^6\,:\, \vcx_{lb} \leq \vcx \leq \vcx_{ub} \right\},$$
where $\vcx_{ub} = -\vcx_{lb} = [1\;1\;0.5\;0.5\;0.05\;0.1]^\top$.
At each sampling step of data collection, the thrust $F$ is sampled uniformly between $8.98$ and $10.98$ and the torque is sampled uniformly between $[-0.5, 0.5]$.

The gains $k_i$ of the linear feedback controller \eqref{eq: flat-controller-gain} are chosen by pole placement of the closed-loop system, leading to the gains
$$k_0 = 3.96,\;k_1 = 12.08,\;k_2 = 13.16,\;k_3 = 6.03.$$

% \subsection{Stuff not to include}
% \paragraph{Parameterizing the disturbances in terms of all $y, \dot y, \ldots$}
% \begin{corollary}
%     For a system \eqref{eq: perturbed-PFF} and disturbance terms $\Delta(x)$ that satisfy Theorem~\ref{thm: regularity-perturbed}, all the disturbances can be expressed in terms of the flat outputs $\bar{y}$. 
% \end{corollary}
% This corollary is an immediate consequence of Theorem~\eqref{thm: regularity-perturbed}. Since all the state variables can be reconstructed from the flat output and its derivatives, composing them with the disturbance function gives us a parameterization of the disturbances in terms of the flat outputs. This seems to suggest that we can learn the disturbances in terms of $\bar y$ directly, instead of $x$. While seemingly interesting and does save us some time with manual derivation, the biggest algorithmic challenges remain: (1) we still require observations of full state $x$ to compute the time derivative, which is hard to measure because of unknown disturbance; (2) We still need to compose the learned maps with derivatives of original flat maps and take Jacobians of the neural network.

% What could be useful, however, is its more interesting "converse": disturbances that depend only on $\bar y$ lead to differentially flat systems, which could allow for a richer class of disturbances. The proof would go similar to what we did with the variable graph formulation. Thinking more about this might shed additional light on the variable graph argument and its generality compared to this pure-feedback formulation.

\end{document}